\documentclass[twocolumn]{autart}    

\usepackage{graphicx}          
\newtheorem{theorem}{\textbf{Theorem}}
\newtheorem{lemma}{\textbf{Lemma}}
\newtheorem{example}{\textbf{Example}}
\newtheorem{assumption}{\textbf{Assumption}}
\newtheorem{corollary}{\textbf{Corollary}}
\newtheorem{remark}{\textbf{Remark}}
\newtheorem{definition}{\textbf{Definition}}

\newenvironment{proof}{{{\bf Proof:}}}{\hfill $\square$\par}
\usepackage{amsmath}
\usepackage{amssymb}
\usepackage{epstopdf}
\usepackage{float}
\usepackage{xcolor}
\usepackage{varwidth}
\begin{document}
	
	\begin{frontmatter}
		
		\title{Data-Driven Observers Design for Descriptor Systems} 
		
		\thanks[footnoteinfo]{This paper was not presented at any IFAC
			meeting. This work was supported by the National Key Research and Development Program of China under Grant 2025YFA1018800 and the National Natural Science Foundation of China under Grant 62373059. Corresponding authors: Keke Huang and Zhongqi Sun.}
		
		\author[a]{Yuan Zhang}\ead{zhangyuan14@bit.edu.cn},    
		\author[a]{Yu Wang}\ead{3220241197@bit.edu.cn},    
        	\author[b]{Keke Huang}\ead{huangkeke@csu.edu.cn}, 
        	\author[a]{Zhongqi Sun}\ead{zhongqisun@bit.edu.cn}, 
		and \author[c]{Tyrone Fernando}\ead{tyrone.fernando@uwa.edu.au}  
		\address[a]{School of Automation, Beijing Institute of Technology, Beijing, China} 
        		\address[b]{School of Automation,
Central South University
Changsha, China}

        		\address[c]{Department of Electrical, Electronic and Computer Engineering, University of Western Australia (UWA), Crawley, WA 6009, Australia} 

		\begin{keyword}                           
			Descriptor systems; data-driven observer; state estimation; unknown input observer.             
		\end{keyword}                             

		\begin{abstract}                          
			State estimation constitutes a core task in monitoring, supervision, and control of dynamic systems. This paper proposes a data-driven framework for the design of state observers for descriptor systems. Necessary and sufficient conditions for the existence of a standard state observer are derived purely from data under mild assumptions. When the system is subject to unknown inputs, we further extend the framework to the data-driven design method for full-order unknown input observer (UIO). Notably, for both the standard state observer and the UIO, we establish the mathematical equivalence between the proposed data-driven existence conditions and classical model-based ones. Moreover, the data-driven approach is applied to the design of extended state observers, enabling simultaneous estimation of system states and disturbances via system augmentation. Numerical simulations validate the effectiveness of the proposed methods. 
		\end{abstract}
		
	\end{frontmatter}
	
	\section{Introduction}
	State estimation \cite{chernousko1993state}, a core problem in control theory, aims to reconstruct the internal states of a system that cannot be directly measured, using input and output measurements. The development of asymptotic observer theory can be traced back to the foundational contributions of Luenberger \cite{luenberger1966observers,luenberger1979introduction}, which were principally concerned with systems characterized by linear state-space dynamics. The standard formulation of the problem presupposes that a complete mathematical model of the system is given, and that all input and output signals influencing the system are accessible. However, in practical complex systems—such as power grids, chemical processes, economic networks and so on \cite{gao2007actuator,duan2010analysis,abbaszadeh2012generalized}—obtaining precise analytical models is often challenging. These systems typically involve algebraic constraints coupled with dynamic behaviors, making them more suitably described by descriptor systems. Distinct from usual linear time-invariant (LTI) systems, descriptor systems exhibit unique characteristics, such as impulsive behavior, non-causality, and eigenvalues at infinity, which pose significant challenges to the design of state estimators. The inherent algebraic-differential structure of descriptor systems necessitates specialized approaches to ensure stability and accuracy in state reconstruction, particularly when dealing with inaccessible inputs or external disturbances.
	
	In recent years, several model-based observer design methods have been proposed for descriptor systems. For example, \cite{wang2012observer} introduced a full-order observer design for discrete-time descriptor systems under certain assumptions. \cite{guo2015reduced} focused on discrete-time descriptor systems and developed a reduced-order observer design, later extending it to a class of nonlinear descriptor systems. Based on solutions to the Riccati equation, \cite{shields1996observer} designed a nonlinear observer for a class of continuous-time nonlinear descriptor systems subject to unknown inputs and faults. For linear time-invariant continuous-time descriptor systems, \cite{darouach2002design} investigated the reduced-order observer design using the generalized Sylvester equation. Furthermore, \cite{yang2013nonlinear} established a unified framework for designing both full-order and reduced-order observers for a class of continuous-time nonlinear descriptor systems with time-varying nonlinearities satisfying quadratic constraints.
	
	Although model-based observer design methods have achieved significant progress in the study of descriptor systems \cite{koenig2002design,koenig2006observer,dai1988observers,darouach2010h}, their practical effectiveness relies heavily on the availability of explicit mathematical models. However, obtaining such system matrices is often a challenging and time-consuming task. When dealing with complex physical interactions, the system identification process becomes prohibitively expensive or computationally burdensome \cite{moonen1992subspace,he2022identification}.
	
	To address the challenges where explicit physical models are difficult or costly to obtain, data-driven control methods have emerged as a promising alternative. These approaches eliminate the reliance on explicit mathematical models and instead leverage operational data directly to design controllers or observers. The framework represented by Willems' Fundamental Lemma \cite{van2020willems,schmitz2022willems} demonstrates that, under persistent excitation conditions, the Hankel matrix constructed from input–output data can uniquely characterize the full dynamic behavior of a system. This provides a theoretical foundation for constructing state observers directly from the data.

	
	Despite significant progress in data-driven estimation for non-singular linear systems \cite{turan2021data,disaro2024equivalence,zhang2025data}, considerable gaps remain in extending them to descriptor systems due to the algebraic constraints and inherent non-causality of discrete-time descriptor systems. In this paper, we propose a comprehensive data-driven framework for state estimation in descriptor systems and rigorously establish the equivalence between model-based and data-driven approaches. Our principal contributions are fourfold:
	\begin{enumerate}
		\item For linear discrete-time descriptor systems without unknown inputs, we derive necessary and sufficient conditions for the existence of a state observer based solely on historical data matrices. Specifically, to address the challenges posed by the non-causal nature of descriptor systems, the one-step-ahead future output is explicitly introduced into the observer structure.
		\item For descriptor systems subject to unknown inputs, we develop a data-driven design method for the full-order Unknown Input Observer (UIO), establishing its existence conditions and design framework.
		\item We prove the mathematical equivalence between the proposed data-driven and the classical model-based design approaches. This theoretical analysis encompasses both the standard state observer and the UIO case.
		\item Finally, we apply the proposed data-driven framework to the design of Extended State Observers (ESOs) for LTI systems. By constructing an augmented system, we demonstrate that the proposed method can achieve simultaneous estimation of system states and disturbances without relying on explicit system matrices.
	\end{enumerate}
	Interestingly, we find that compared to the data requirement for the observer design of non-singular systems \cite{turan2021data,disaro2024equivalence}, descriptor systems generally require less data. Additionally, it turns out that descriptor systems and standard LTI systems share the uniform data-based observer design procedure.
	This is appealing, as the proposed data-based procedure 
	is applicable without identifying the system singularity, which may be challenging~\cite{he2022identification}.  
	
	The subsequent sections are organized as follows: Section~\ref{sec: pre} reviews necessary preliminaries on descriptor systems. Section~\ref{sec: full order} presents the data-driven state observer design and demonstrates the theoretical equivalence with model-based methods. Section~\ref{sec: uio} addresses observers with unknown inputs, detailing the design of full-order UIO. The application of the proposed framework to ESO design is provided in Section~\ref{sec: eso}. Finally, Section~\ref{sec: simulation} validates the effectiveness of the proposed methods through numerical simulations.
	
	Notations: $\mathbb{R}$ and $\mathbb{Z}_{0}$ denote the set of real numbers and the natural numbers with zero. The identity matrix in $\mathbb{R}^{n \times n}$ and the zero matrix in $\mathbb{R}^{n \times n}$ are denoted by $I_n$ and $0_{n\times n}$, respectively. For a matrix $A\in \mathbb{R}^{m\times n}$, we denote by $Im(A)$ and $Ker(A)$ the image and the right kernel of $A$. In addition, we denote by $A^{\dagger} \in \mathbb{R}^{n\times m}$ its Moore-Penrose inverse and ${rk}(A)$ its rank. When $m=n$, $det(A)$ denotes the determinant of $A$. For a set of matrices $A_1,...,A_n$ with the same column numbers, $[A_1^\top,A_2^\top,\cdots,A_n^\top]^\top$ is also denoted by ${\bf col}\{A_1,A_2,\cdots,A_n\}$.
	For a sequence of vectors $\{f(k)\}_{k=0}^{T-1}$, the Hankel matrix of depth $L$ is denoted by $H_{L}(f_{[0,T-1]})$ and defined as\\
	$H_{L}(f_{[0,T-1]}) \triangleq
	\begin{bmatrix}
		f(0) & f(1) & \cdots & f(T-L) \\
		f(1) & f(2) & \cdots & f(T-L+1) \\
		\vdots & \vdots & \ddots & \vdots \\
		f(L-1) & f(L) & \cdots & f(T-1)
	\end{bmatrix}$. If $H_{L}(f_{[0,T-1]})$ has full row rank, we say $\{f(k)\}_{k=0}^{T-1}$ is persistently exciting of order $L$.
	
	\section{Preliminaries  and model-based observers}\label{sec: pre}
	Consider a discrete-time linear descriptor system $\Sigma_1$
	\begin{subequations}
		\begin{align}
			Ex(k+1) &= Ax(k) + Bu(k) \label{2_1a} \\
			y(k)    &= Cx(k) \label{2_1b}
		\end{align}
		\label{2_1}
	\end{subequations}with a consistent initial condition $x(0)$ such that $Ax(0)+Bu(0)\in Im(E)$, where $A,E \in \mathbb{R}^{n\times n}$, $B \in \mathbb{R}^{n \times m}$, $C\in \mathbb{R}^{p\times n}$. We assume that $det(\lambda E - A) \ne 0$ for some $\lambda \in \mathbb{C}$, i.e., regularity of system (\ref{2_1}). Particularly, we are interested in the case where the matrix $E$ is singular, i.e., $rk(E) < n$. We denote the finite
	spectrum set of $(E, A)$ by $\sigma(E,A)$, defined to be $\sigma(E,A)=\{\lambda\in {\mathbb C}: det(\lambda E-A)=0\}$.
	
	
	Since the descriptor system (\ref{2_1}) is regular, there exist invertible matrices $P,S\in \mathbb{R}^{n\times n}$ such that (cf. \cite{Dai1989SingularCS})

\begin{equation}\label{decomposition}
\begin{array}{c}
    SEP = \begin{bmatrix}
        I_{n_1} & 0 \\
        0 & R
    \end{bmatrix}, \quad
    SAP = \begin{bmatrix}
        A_{1} & 0 \\
        0 & I_{n_2}
    \end{bmatrix}, \\
    SB = \begin{bmatrix}
        B_{1} \\
        B_{2}
    \end{bmatrix}, \quad
    CP = \begin{bmatrix}
        C_{1} & C_{2}
    \end{bmatrix},
\end{array}
\end{equation}
	where $R\in \mathbb{R}^{n_2\times n_2}$ is nilpotent with nilpotency index $s\in\mathbb{Z}_{0}$ (i.e., $R^{s-1}\ne 0$ and $R^{s}=0$), and $A_1\in \mathbb{R}^{n_1\times n_1}$, $B_1\in \mathbb{R}^{n_1\times m}$, $B_2\in \mathbb{R}^{n_2\times m}$, $C_1\in \mathbb{R}^{p\times n_1}$, $C_2\in \mathbb{R}^{p\times n_2}$ with $n_1+n_2=n$. Accordingly, $P$ is partitioned as $P = \begin{bmatrix} P_1 & P_2 \end{bmatrix}$, where $P_1 \in \mathbb{R}^{n \times n_1}$ and $P_2 \in \mathbb{R}^{n \times n_2}$. Upon the introduction of the coordinate change $z=P^{-1}x$, system (\ref{2_1}) can be equivalently  written in the following form
	\begin{equation}
		\begin{aligned}
			\begin{bmatrix}
				I_{n_1} & 0 \\
				0 & R
			\end{bmatrix}z(k+1)
			= \begin{bmatrix}
				A_{1} & 0 \\
				0 & I_{n_2}
			\end{bmatrix}z(k)
			+
			\begin{bmatrix}
				B_{1} \\
				B_{2}
			\end{bmatrix}u(k),
		\end{aligned}
		\label{2_2}
	\end{equation}		
	\begin{equation}
		y(k)=\begin{bmatrix}
			C_{1} & C_{2}
		\end{bmatrix}z(k).
		\label{2_3}
	\end{equation}	
	Given an input trajectory $u:\mathbb{Z}_{0}\to\mathbb{R}^{m}$ and an initial value $z_{1} ^{0}\in\mathbb{R}^{n_1}$ there is a unique trajectory such that the state
	$$
	z=
	\begin{bmatrix}
		z_{1}^\top &	z_{2}^\top
	\end{bmatrix}^\top:\mathbb{Z}_{0}\to\mathbb{R}^{n_1+n_2}
	$$
	satisfies $z_{1}(0)=z_{1}^{0}$. This state $z(k)$, $k\in\mathbb{Z}_{0}$, is given by		
	\begin{equation}
		z_{1}(k)=A_{1}^{k}z_{1}(0)+\sum_{m=1}^{k} A_{1}^{k-m}B_{1}u(m-1),
		\label{2_4}
	\end{equation}
	and
	\begin{equation}\label{2_5}
		z_{2}(k) = R^{l-k}z_2(l) - \sum_{m=0}^{l-k-1}R^{m}B_{2}u(k+m), \quad l\ge k.
	\end{equation}
	Here, \eqref{2_4} captures the dynamics of the slow subsystem of system \eqref{2_1}, while \eqref{2_5} the fast subsystem. Since in the fast subsystem, the current state depends on the future inputs (from \eqref{2_5}, $z_2(k)$ is determined by $\{u(k),\cdots,u(k+s-1)\}$),  discrete-time descriptor systems are generically non-causal.
	
	\begin{definition} \cite{Dai1989SingularCS} System \eqref{2_1} is called R-controllable if its slow subsystem \eqref{2_4} is controllable, i.e.,\\ ${rk}([B_1,A_1B_1,\cdots,A^{n_1-1}_1B_1])=n_1$. System \eqref{2_1} is C-controllable if both of its slow and fast subsystems are controllable, i.e., ${rk}([B_1,A_1B_1,\cdots,A^{n_1-1}_1B_1])=n_1$ and ${rk}([B_2,RB_2,\cdots, R^{s-1}B_2])=n_2$. 
		System \eqref{2_1} is called stabilizable if there exists a
		matrix $K\in \mathbb{R}^{m \times n}$, such that $\sigma(E,A + BK)$ is within the unit circle. It is called detectable if its dual system $(E^\top,A^\top,C^\top)$is stabilizable.
	\end{definition}
	\begin{definition} \cite{Dai1989SingularCS}
		System \eqref{2_1} is called observable if its state $x(k)$ at any time point $k$ 
		is uniquely determined by $\{u(i), y(i), i = 0,\cdots,l\}$ where $l$ is sufficiently large.
	\end{definition}
	It is known that system \eqref{2_1} is detectable if and only if~\cite{dai1988observers}\begin{equation}\label{detectability}
		rk\begin{bmatrix} \lambda E - A \\ C \end{bmatrix} = n, \quad \forall \lambda \in \mathbb{C}, \ |\lambda| \geq 1.
	\end{equation}Moreover, system \eqref{2_1} is observable if and only if 
	\begin{equation*}
		rk\begin{bmatrix} \lambda E - A \\ C \end{bmatrix} = n, \, \forall \lambda \in \mathbb{C},  \, {\rm and}\,\, rk\begin{bmatrix}
			E \\ C
		\end{bmatrix} = n.
	\end{equation*}
	For descriptor systems, the singularity of matrix $E$ implies that the state response consists of not only smooth dynamic modes but also algebraic behaviors (or non-causal modes) residing in the fast subsystem.  The following assumption on the observability of the fast subsystem is standard to guarantee the existence of a normal observer \cite{Dai1989SingularCS,dai1988observers,darouach2002design,wang2012observer}.  
	\begin{assumption}\label{assump_obs}
		The fast subsystem of (\ref{2_1}) is observable, i.e. $rk([C_2^\top ,R^\top C_2^\top ,\cdots,(R^{n_2-1})^\top C_2^\top ]^\top) =n_2$, which is equivalent to (see \cite{Dai1989SingularCS})
		\begin{equation}\label{fast_obs}
			rk\begin{bmatrix}
				E \\ C
			\end{bmatrix} = n.
		\end{equation}
	\end{assumption}
	\begin{remark}
		Condition \eqref{fast_obs} is also called the dual normalizability condition \cite[Chap 8]{Dai1989SingularCS}. This condition guarantees that the state $x(k)$ at any time point $k$ is uniquely determined by initial condition $x(0)$ and the former  inputs $u(i)$, together with former measurements $y(i)$, $i = 0,1,...,k$ (i.e., the so-called Y-observability). {How to check this assumption from data will be discussed in Corollary \ref{aump_1_checking} subsequently.}   
	\end{remark}
	
	If condition (\ref{fast_obs}) holds, there exists a full row rank matrix $[T \quad N]$ (not unique) such that
	\begin{equation}\label{system_change}
		TE+NC = I_n.
	\end{equation}
	By using \eqref{system_change}, system \eqref{2_1} is rewritten
	\begin{equation}\label{system_rewrittn}
		x(k+1) = TAx(k) + TBu(k) + Ny(k+1).
	\end{equation}
	Equation \eqref{system_rewrittn} reveals that the state $x(k+1)$ can be explicitly expressed in terms of the current state, input, and one-step-ahead future output. In the non-singular case, i.e., $E=I_n$, we can choose $T=I_n, N=0$ so that \eqref{system_rewrittn} reduces to a standard state-space system.   
	\begin{definition}\label{definition1}
		A linear observer $\hat{\Sigma}$ is designed in the following form:
		\begin{equation}\label{full_order_observer}
			\hat{x}(k+1) = A_{O}\hat{x}(k) + B_{O}^{u}u(k) + B_{O}^{y}y(k) + N_{O}y(k+1),
		\end{equation}
		where $\hat{x}(k)\in \mathbb{R}^n$ denotes the estimate of state $x(k)$. For the observer \eqref{full_order_observer} to be valid, it must satisfy the asymptotic convergence condition:
		\[
		\lim_{k \to \infty} \|x(k) - \hat{x}(k)\| = 0,
		\]
		for any choice of $x(0)$ and input sequence $u(k)$, where $k \in \mathbb{Z}_0$.
	\end{definition}
	
	\begin{remark}In contrast to the observer design for standard LTI systems, the observer \eqref{full_order_observer} explicitly introduces $y(k+1)$ \cite{wang2012observer}. This modification is essential for addressing the inherent non-causal characteristic of descriptor systems.
		Note that the observer $\hat \Sigma$ in \eqref{full_order_observer} can be equivalently written as the following form 
		\begin{subequations}\label{full_order_observer_state}
			\begin{align}
				\zeta(k+1) &= A_O \zeta(k)+B_O^u u(k) + (B_O^y+A_ON_O) y(k),\\
				\hat{x}(k) &= \zeta(k) + N_Oy(k),
			\end{align}
		\end{subequations}
		where $\zeta(k)$ and $\hat{x}(k)$ are the state and output of $\hat{\Sigma}$. 
	\end{remark}

	\section{Data-driven observer design}\label{sec: full order}  
	In this section, we proceed to the data-driven design of observers. An offline dataset comprising input sequence ${u}_d = \{u_d(k)\}_{k=0}^{T+s-1}$, output sequence $y_d = \{y_d(k)\}_{k=0}^{T}$, and state sequence ${x}_d = \{x_d(k)\}_{k=0}^{T}$ is collected from the system \eqref{2_1}.  The subscript $d$ denotes experimentally recorded data. These datasets are partitioned into past $(p)$ and future $(f)$ segments represented as
	$$
	\begin{aligned}
		U_p &= [u_d(0),\dots,u_d(T-1)], \\
		Y_p &= [y_d(0),\dots,y_d(T-1)], \\
		Y_f &= [y_d(1),\dots,y_d(T)], \\
		X_p &= [x_d(0),\dots,x_d(T-1)], \\
		X_f &= [x_d(1),\dots,x_d(T)].
	\end{aligned}
	$$
	\begin{definition}\label{definition:compatible trajectory}
		An (input/output/state) trajectory \((\{u(k)\}_{t \in \mathbb{Z}_{0}}, \{y(k)\}_{t \in \mathbb{Z}_{0}}, \{x(k)\}_{t \in \mathbb{Z}_{0}})\) is compatible with the historical data \((u_{d}, y_{d}, x_{d})\) if
		\begin{equation}\label{compatible trajectory}
			\begin{bmatrix}
				x(k) \\ x(k+1) \\ u(k) \\ y(k) \\ y(k+1)
			\end{bmatrix} \in   Im\left(
			\begin{bmatrix}
				X_{p} \\ X_{f} \\ U_{p} \\ Y_{p} \\ Y_{f}
			\end{bmatrix}\right),
			\quad
			\forall k \in \mathbb{Z}_{0}.
		\end{equation}
		The set of all trajectories compatible with the historical data \((u_{d}, y_{d}, x_{d})\) is denoted by
		\begin{equation}
			\begin{aligned}
				\mathbb{T}_{c}\left(u_{d}, y_{d}, x_{d}\right) \triangleq\{ & \left(\{u(k)\}_{k \in \mathbb{Z}_{0}},\{y(k)\}_{k \in \mathbb{Z}_{0}},\{x(k)\}_{k \in \mathbb{Z}_{0}}\right): \\
				\eqref{compatible trajectory}\,\ holds \} .
			\end{aligned}
		\end{equation}
	\end{definition}
	We now introduce the set of all the (input/output/state) trajectories that can be generated by system \eqref{2_1}:
	$$
	\begin{aligned}
		\mathbb{T}_{\Sigma_1} \triangleq \{\left(\{u(k)\}_{k \in \mathbb{Z}_{0}},\{y(k)\}_{k \in \mathbb{Z}_{0}},\{x(k)\}_{k \in \mathbb{Z}_{0}}\right):\\  \left(\{u(k)\}_{k \in \mathbb{Z}_{0}},\{y(k)\}_{k \in \mathbb{Z}_{0}},\{x(k)\}_{k \in \mathbb{Z}_{0}}\right) \text {satisfies \eqref{2_1}}\}.
	\end{aligned}
	$$
	
	To design a data-driven state observer, we rely on historical data to characterize the system's behavior. A prerequisite for ensuring that  $\mathbb{T}_c$ is fully consistent with $\mathbb{T}_{\Sigma_1}$ is the richness of the collected data.
	Consider the slow subsystem state $z_{1d}(k)$ corresponding to $x_d(k)$ derived from the system decomposition. Let the historical trajectory of this sub-state be denoted as $z_{1d} \triangleq \{z_{1d}(k)\}_{k=0}^{T-1}$.
	\begin{assumption}\label{assump:PE}
		The data $(u_d,x_d,y_d)$ satisfy the condition that the matrix $\begin{bmatrix} H_1(z_{1d[0,T-1]}) \\ H_{s+1}(u_{d[0,T+s-1]}) \end{bmatrix}$ is of full row rank.
	\end{assumption}
	\begin{remark}\label{rem:rank_condition_justification}
		As $z_1(k)$ is an internal state which is difficult to measure, Assumption \ref{assump:PE} is theoretically guaranteed if system \eqref{2_1} is R-controllable and the  $\{u_d(k)\}_{k=0}^{T+s-1}$ is persistently exciting of order $n_1+s+1$ (see \cite[Th.~1]{van2020willems}). Note that $n_1+s<n$ holds for singular systems. Compared to the data requirement for the observer design of usual LTI systems \cite{disaro2024equivalence}, which requires $u_d$ to be presistently exciting of order $n+1$, descriptor systems generally require less data. Compared to common data assumptions made for standard LTI systems \cite{turan2021data,disaro2024equivalence,zhang2025data}, it is worth noting that Assumption \ref{assump:PE} does not guarantee the full row rank of $[U_p^\top,X_p^\top]^\top$.
	\end{remark}
	\begin{lemma}\label{lem:willems}\textbf{\textup{(Equivalence of system and compatible trajectories)}}
		Under Assumption \ref{assump:PE}, the set of all trajectories generated by system \eqref{2_1} exactly coincides with the set of trajectories compatible with the given historical data \( (u_d, y_d, x_d) \), i.e.,
		\begin{equation}\label{eq:equivalence}
			\mathbb{T}_{\Sigma_1} = \mathbb{T}_c(u_d, y_d, x_d).
		\end{equation}
	\end{lemma}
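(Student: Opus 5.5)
The plan is to prove the two inclusions $\mathbb{T}_c(u_d,y_d,x_d)\subseteq\mathbb{T}_{\Sigma_1}$ and $\mathbb{T}_{\Sigma_1}\subseteq\mathbb{T}_c(u_d,y_d,x_d)$ separately. The first is a direct inheritance argument. The second uses the decomposition \eqref{decomposition} together with Assumption~\ref{assump:PE}.

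\emph{Inclusion $\mathbb{T}_c\subseteq\mathbb{T}_{\Sigma_1}$.} The data are generated by \eqref{2_1}, so the data matrices satisfy $EX_f=AX_p+BU_p$, $Y_p=CX_p$ and $Y_f=CX_f$. Take a compatible trajectory and fix $k$. By \eqref{compatible trajectory} there is a vector $g_k$ with
\[
\mathbf{col}\{x(k),x(k+1),u(k),y(k),y(k+1)\}=\mathbf{col}\{X_p,X_f,U_p,Y_p,Y_f\}\,g_k .
\]
Multiplying the data identities by $g_k$ gives $Ex(k+1)=Ax(k)+Bu(k)$ and $y(k)=Cx(k)$ for every $k$. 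Hence the trajectory satisfies \eqref{2_1}. Consistency of the initial condition is automatic, because $Ax(0)+Bu(0)=Ex(1)\in Im(E)$.

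\emph{Inclusion $\mathbb{T}_{\Sigma_1}\subseteq\mathbb{T}_c$.} Take any trajectory of \eqref{2_1} defined on all of $\mathbb{Z}_0$ and pass to coordinates $z=P^{-1}x$. Applying \eqref{2_5} with $l=k+s$ and using $R^s=0$ gives
\[
z_2(k)=-\sum_{i=0}^{s-1}R^iB_2u(k+i),
\]
and likewise $z_2(k+1)=-\sum_{i=0}^{s-1}R^iB_2u(k+1+i)$. From \eqref{2_4}, $z_1(k+1)=A_1z_1(k)+B_1u(k)$. Therefore the five-block vector
\[
w(k)=\mathbf{col}\{x(k),x(k+1),u(k),y(k),y(k+1)\}
\]
can be written as
\[
w(k)=M\,\mathbf{col}\{z_1(k),u(k),\dots,u(k+s)\}.
\]
Here $M$ is a fixed matrix built from $P,A_1,B_1,B_2,R,C$, and it does not depend on $k$ or on the trajectory.

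The same identities hold for the data, because $u_d$ is recorded up to time $T+s-1$. Column by column, this gives
\[
\mathbf{col}\{X_p,X_f,U_p,Y_p,Y_f\}=M\begin{bmatrix}H_1(z_{1d[0,T-1]})\\ H_{s+1}(u_{d[0,T+s-1]})\end{bmatrix}.
\]
By Assumption~\ref{assump:PE} the right factor has full row rank, so the image of the data matrix equals $Im(M)$. That image contains $w(k)$ for every $k$, which proves \eqref{compatible trajectory}.

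\emph{Main obstacle.} The delicate step is establishing the common factorization through the same $M$ for both the data and an arbitrary system trajectory. This means checking that the fast state at times $k$ and $k+1$ depends only on $u(k),\dots,u(k+s)$, and not on any free fast initial value. It also requires that an arbitrary trajectory has $z_1(k)$ and $u(k),\dots,u(k+s)$ ranging over exactly the variables whose data counterparts are made rich by Assumption~\ref{assump:PE}. I will handle this by invoking regularity and the nilpotency index $s$ in \eqref{2_5}. I will also note that the depth $s+1$ of the input Hankel matrix is exactly what covers the input window needed for $x(k+1)$ and $y(k+1)$.
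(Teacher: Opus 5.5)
Your proposal is correct and follows essentially the same route as the paper. The inclusion $\mathbb{T}_{\Sigma_1}\subseteq\mathbb{T}_c$ is the paper's argument: a common factorization through $\mathcal{H}\mathcal{M}$ for both the data and an arbitrary trajectory, combined with the full row rank of the Hankel stack in Assumption~\ref{assump:PE}.

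For $\mathbb{T}_c\subseteq\mathbb{T}_{\Sigma_1}$ you take a slightly different step. You multiply the raw data identities $EX_f=AX_p+BU_p$, $Y_p=CX_p$, $Y_f=CX_f$ by $g_k$. The paper instead defines auxiliary variables $(z_1(k),u(k),\dots,u(k+s))$ through the factorization. Your version is more direct and shows that this inclusion needs no excitation assumption.
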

	\begin{proof}
		Let \( \left( \{u(k)\}_{k \in \mathbb{Z}_{0}}, \{y(k)\}_{k \in \mathbb{Z}_{0}}, \{x(k)\}_{k \in \mathbb{Z}_{0}} \right) \in \mathbb{T}_c \). By Definition \ref{definition:compatible trajectory}, there exists a vector  $g_1 \in \mathbb{R}^{T}$  such that
		\begin{equation}\label{eq:compatible} \begin{array}{c} {\bf col}\{ x(k), x(k+1), u(k), y(k),y(k+1)\}\\={\bf col} \{ X_p,X_f,U_p,Y_p,Y_f\}g_1 \end{array}\end{equation}
		Since the historical data is generated by system \eqref{2_1}, it satisfies the system dynamics:
		\begin{equation}\label{eq:historical_data}
			[X_p^\top, X_f^\top, U_p^\top, Y_p^\top, Y_f^\top]^\top=\mathcal{HM}\begin{bmatrix} H_1(z_{1d,[0,T-1]}) \\ H_{s+1}(u_{d,[0,T+s-1]}) \end{bmatrix},
		\end{equation}
		where $\mathcal{H} = \begin{bmatrix}
			[P_1 \quad P_2] & 0 & 0 & 0 & 0 \\
			0 & [P_1 \quad P_2] & 0 & 0 & 0\\
			0 & 0 & I & 0 & 0\\
			0 & 0 & 0 & I & 0\\
			0 & 0 & 0 & 0 & I
		\end{bmatrix}$ and $\mathcal{M} = \begin{bmatrix} 
			I & 0 & 0 & \cdots & 0 & 0 \\
			0 & -B_2 & -RB_2 & \cdots & -R^{s-1}B_2 & 0 \\
			A_1 & B_1 & 0 & \cdots & 0 & 0 \\
			0 & 0 & -B_2 & \cdots & -R^{s-2}B_2 & -R^{s-1}B_2 \\
			0 & I & 0 & \cdots & 0 & 0 \\
			C_1 & - C_2B_2 & -C_2RB_2 & \cdots & -C_2R^{s-1}B_2 & 0 \\
			C_1A_1 & C_1B_1 & -C_2B_2 & \dots & -C_2R^{s-2}B_2 & -C_2R^{s-1}B_2
		\end{bmatrix}$, and the involved matrices are defined in \eqref{decomposition}.
		
		Define $(z_1(k),u(k),\cdots,u(k+s))$ as follows  \begin{equation}\label{eq:assumption_PE}
			\begin{bmatrix} z_1(k) \\ u(k) \\ \vdots \\ u(k+s) \end{bmatrix} =\begin{bmatrix} H_1(z_{1d,[0,T-1]}) \\ H_{s+1}(u_{d,[0,T+s-1]}) \end{bmatrix} g_1.
		\end{equation}
		Substituting \eqref{eq:historical_data} and \eqref{eq:assumption_PE} into \eqref{eq:compatible}, we obtain
		\begin{equation}\label{eq:system}
			\begin{bmatrix}
				x(k)\\ x(k+1) \\ u(k)\\ y(k)\\y(k+1)
			\end{bmatrix} = \mathcal{HM}\begin{bmatrix}
				z_1(k)\\u(k)\\ \vdots \\ u(k+s-1)\\u(k+s)
			\end{bmatrix}.
		\end{equation}
		This shows that the trajectory \( (u(k), y(k),y(k+1), x(k), x(k+1)) \) is generated by system \eqref{2_1} with sub-state $z_1(k)$ and inputs $(u(k),\cdots,u(k+s))$. Hence, $\scalebox{0.9}{$\left( \{u(k)\}, \{y(k)\}, \{x(k)\} \right) \in \mathbb{T}_{\Sigma_1}$}$, proving $\scalebox{0.9}{$\mathbb{T}_{\Sigma_1} \supseteq \mathbb{T}_c(u_d, y_d, x_d)$}$.
		
		Arbitrarily select a trajectory $( \{u(k)\}_{k \in \mathbb{Z}_{0}}$, $\{y(k)\}_{k \in \mathbb{Z}_{0}}$, $\{x(k)\}_{k \in \mathbb{Z}_{0}}) \in \mathbb{T}_{\Sigma_1}$. This trajectory satisfies the system dynamics given by \eqref{eq:system}. Since $\begin{bmatrix} H_1(z_{1d[0,T-1]}) \\ H_{s+1}(u_{d[0,T+s-1]}) \end{bmatrix}$ is of full row rank, there exists a vector $g_1\in \mathbb R^T$ such that \eqref{eq:assumption_PE} holds. Substituting \eqref{eq:historical_data} and \eqref{eq:assumption_PE} into \eqref{eq:system}, it follows directly that the trajectory also satisfies the compatibility condition \eqref{eq:compatible}. This implies
		$\left( \{u(k)\}_{k \in \mathbb{Z}_{0}}, \{y(k)\}_{k \in \mathbb{Z}_{0}}, \{x(k)\}_{k \in \mathbb{Z}_{0}} \right) \in \mathbb{T}_c(u_d, y_d, x_d)$,
		thereby proving the inclusion \(\mathbb{T}_{\Sigma_1} \subseteq \mathbb{T}_c(u_d, y_d, x_d)\).
	\end{proof}
	\begin{remark}[Practical data collection strategy]\label{rem_data_collection}
		Since the structural indices $n_1$ and $s$ are unknown in a purely data-driven setting, one can adopt a conservative strategy based on the system dimension $n$ (noting that $n_1,s \le n$). To ensure the validity of Assumption \ref{assump:PE}, the input data length can be extended to cover the interval $[0, T+n-1]$ for some $T$ (length of the state data), and the input sequence is chosen to be persistently exciting of order $n+1$.
	\end{remark}

	Provided the historical data satisfies the informativity requirements (as formalized in Assumption \ref{assump:PE}), Lemma \ref{lem:willems} shows that the subspace spanned by the history trajectories uniquely characterizes the system's behavior. The following theorem shows the design of observers from these data matrices.
	
	\begin{theorem}[Data-driven observer design]\label{thm:data_driven observer}
		Under Assumptions \ref{assump_obs} and \ref{assump:PE}, there exists a state observer of the form (\ref{full_order_observer}) for system (\ref{2_1}), if there exists matrices $\Sigma_{X_p}$, $\Sigma_{U_p}$, $\Sigma_{Y_p}$, and $\Sigma_{Y_f}$ with compatible dimensions such that \begin{equation}\label{eq:his_into_observer}
			X_f = \begin{bmatrix} \Sigma_{X_p} &  \Sigma_{U_p} & \Sigma_{Y_p} & \Sigma_{Y_f} \end{bmatrix}
			\begin{bmatrix}
				X_p\\U_p\\Y_p\\Y_f
			\end{bmatrix},
		\end{equation}in which $\Sigma_{X_p}$ is Schur stable. 
		Under this condition, a state observer of the form \eqref{full_order_observer} can be constructed with \begin{equation}\label{matrix_match}
			\begin{aligned}
				A_{O} = \Sigma_{X_p},\,\, B_{O}^{u} = \Sigma_{U_p},\,\
				B_{O}^{y} = \Sigma_{Y_p},\,\ N_{O} = \Sigma_{Y_f}.
			\end{aligned}
		\end{equation}
		Furthermore, for any initial estimate $\hat{x}(0)$, the recursive state estimate generated by
		\begin{equation}\label{data-driven-observer}
			\hat{x}(k+1) = \begin{bmatrix} \Sigma_{X_p} & \Sigma_{U_p} & \Sigma_{Y_p} & \Sigma_{Y_f} \end{bmatrix} \begin{bmatrix} \hat{x}(k) \\ u(k) \\ y(k) \\ y(k+1) \end{bmatrix}, \quad k \in {\mathbb Z}_0
		\end{equation}
		asymptotically converges to the true state $x(k+1)$.
	\end{theorem}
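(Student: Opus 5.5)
The argument has two parts. First, relation \eqref{eq:his_into_observer} holds on the historical data and extends to every system trajectory through Lemma~\ref{lem:willems}. Second, the error dynamics then have system matrix $\Sigma_{X_p}$, which is Schur stable.

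\textbf{Step 1 (the data identity holds along every trajectory).} Take any trajectory $(\{u(k)\},\{y(k)\},\{x(k)\})\in\mathbb{T}_{\Sigma_1}$. By Lemma~\ref{lem:willems} (under Assumption~\ref{assump:PE}), the trajectory lies in $\mathbb{T}_c(u_d,y_d,x_d)$. Hence for each $k\in\mathbb{Z}_0$ there is $g(k)\in\mathbb{R}^T$ with
\[
{\bf col}\{x(k),x(k+1),u(k),y(k),y(k+1)\}={\bf col}\{X_p,X_f,U_p,Y_p,Y_f\}\,g(k).
\]
Multiplying \eqref{eq:his_into_observer} on the right by $g(k)$ gives
\[
x(k+1)=\Sigma_{X_p}x(k)+\Sigma_{U_p}u(k)+\Sigma_{Y_p}y(k)+\Sigma_{Y_f}y(k+1)\qquad\forall k.
\]
This is the key step, and it is where the main subtlety lies. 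Assumption~\ref{assump:PE} does not make $[X_p^\top,U_p^\top]^\top$ full row rank, so the matrices $\Sigma$ are not unique. One cannot simply argue that they must coincide with something like $(TA,TB,0,N)$ in \eqref{system_rewrittn}. The point is that we never need uniqueness: any $\Sigma$ satisfying \eqref{eq:his_into_observer} reproduces $x(k+1)$ exactly on the data subspace. Lemma~\ref{lem:willems} guarantees that every stacked vector of the true system falls in that subspace, including the fast-subsystem components that depend on future inputs. The inclusion $\mathbb{T}_{\Sigma_1}\subseteq\mathbb{T}_c$ is exactly what is used, and it relies on the full row rank of ${\bf col}\{H_1(z_{1d}),H_{s+1}(u_d)\}$.

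\textbf{Step 2 (error dynamics).} With the choice \eqref{matrix_match}, subtract the observer recursion \eqref{data-driven-observer} (equivalently \eqref{full_order_observer}) from the identity of Step~1. The input and output terms cancel because the observer uses the same measured $u(k),y(k),y(k+1)$. The error $e(k)=x(k)-\hat x(k)$ therefore satisfies
\[
e(k+1)=\Sigma_{X_p}e(k).
\]
Since $\Sigma_{X_p}$ is Schur stable, $e(k)=\Sigma_{X_p}^k e(0)\to 0$ for any $\hat x(0)$, any consistent $x(0)$ and any input sequence. This is precisely the convergence requirement of Definition~\ref{definition1}, so \eqref{full_order_observer} with \eqref{matrix_match} is a valid observer.

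Assumption~\ref{assump_obs} is not needed for this sufficiency argument itself. It enters only to make the condition meaningful, since \eqref{system_rewrittn} shows that such a representation exists at least model-wise. I would note this briefly and not dwell on it.
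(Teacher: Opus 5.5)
Your proof is correct and follows essentially the paper's route. Lemma~\ref{lem:willems} transfers the data identity \eqref{eq:his_into_observer} to every system trajectory, and the error then satisfies $e(k+1)=\Sigma_{X_p}e(k)$. The only difference is that you right-multiply \eqref{eq:his_into_observer} by one compatible $g(k)$, whereas the paper writes the general solution $g_1=D_1^{\dagger}(\cdot)+\Omega_1\beta_1$ and invokes Assumption~\ref{assump_obs} to get $X_f\Omega_1=0$. That fact already follows from \eqref{eq:his_into_observer} itself, so your remark that Assumption~\ref{assump_obs} is not used in this direction is accurate.
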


	{
		\begin{proof}
			Under Assumption \ref{assump:PE},  Lemma \ref{lem:willems} shows any 
			trajectory $(x(k), u(k), y(k), y(k+1))$ of system \eqref{2_1} satisfies \eqref{eq:compatible}. Consider its sub-equations given as follows
			\begin{equation}\label{solve_compatible trajectory}
				\begin{bmatrix}
					X_p \\ U_p \\ Y_p \\ Y_f
				\end{bmatrix}g_1=
				\begin{bmatrix}
					x(k)\\ u(k) \\ y(k) \\ y(k+1)
				\end{bmatrix}.
			\end{equation}
			The general solution to \eqref{solve_compatible trajectory} is given by  $$g_1 = D_1^{\dagger}
			[ x(k)^\top, u(k)^\top,  y(k)^\top, y(k+1)^\top]^\top+\Omega_1\beta_1,$$
			where $D_1 = [X_p^\top , U_p^\top , Y_p^\top , Y_f^\top]^\top$, $D_1^{\dagger}$ denotes the Moore–Penrose pseudoinverse, $\Omega_1$ is a matrix whose columns form a basis for $Ker(D_1)$, and $\beta_1$ is an arbitrary vector of appropriate dimension. Then, $x(k+1)=X_fg_1$ admits the following state-update representation at each time step $k$:
			\begin{equation}\label{eq:compatible trajectory}
				x(k+1) = X_f g_1 = X_f \left( D_1^{\dagger} \begin{bmatrix} x(k) \\ u(k) \\ y(k) \\ y(k+1) \end{bmatrix} + \Omega_1 \beta \right).
			\end{equation}
			Under Assumption \ref{assump_obs} and based on the equivalent model structure \eqref{system_rewrittn}, $Ker(X_f)\supseteq Ker([X_p^\top , U_p^\top,Y_f^\top]^\top)$, leading to $X_f\Omega_1=0$. Substituting \eqref{eq:his_into_observer} into \eqref{eq:compatible trajectory} yields
			\begin{equation}\label{eq:final}
				x(k+1) = \begin{bmatrix} \Sigma_{X_p} &  \Sigma_{U_p} & \Sigma_{Y_p} & \Sigma_{Y_f} \end{bmatrix}\begin{bmatrix}
					x(k)\\u(k)\\y(k)\\y(k+1)
				\end{bmatrix}.
			\end{equation}Now compare \eqref{eq:final} and \eqref{data-driven-observer} and let the state estimation error be \( e(k) = x(k) - \hat{x}(k) \). Then, $e(k)$ is governed by the autonomous dynamics:
			\[
			e(k+1) = \Sigma_{X_p} e(k), k\in \mathbb{Z}_0.
			\]
			As $\Sigma_{X_p}$ is Schur stable, for any initial error \( e(0) = x(0) - \hat{x}(0) \), the error asymptotically converges to zero, i.e., $\lim_{k \to \infty} e(k) = 0$.
			Consequently, the state estimate \( \hat{x}(k+1) \) generated by \eqref{data-driven-observer} asymptotically converges to the true state \( x(k+1) \) of system \eqref{2_1}.  It is easy to verify that \eqref{data-driven-observer} can be written as the form \eqref{full_order_observer} with parameters given in \eqref{matrix_match}. 
	\end{proof}}

	\begin{remark}\label{remark3}
		From \eqref{eq:his_into_observer}, a general solution for \\
		$[\Sigma_{X_p}, \Sigma_{U_p} ,\Sigma_{Y_p},\Sigma_{Y_f} ]$
		can be expressed as  (\cite[Theorem~2.2]{rao1972generalized}) $\bigl[ \Sigma_{X_p}, \Sigma_{U_p}, \Sigma_{Y_p}, \Sigma_{Y_f} \bigr] = X_f D_1^{\dagger} + K_1(I - D_1D_1^{\dagger})$\label{eq:solve_sigma},
		where $K_1$ is an arbitrary matrix of appropriate dimensions. 
		Partition
		$D_1^{\dagger} = \bigl[ H_{X_p}, H_{U_p}, H_{Y_p}, H_{Y_f} \bigr]$ and $I-D_1D_1^{\dagger} = \bigl[ \Delta_{X_p}, \Delta_{U_p}, \Delta_{Y_p}, \Delta_{Y_f} \bigr]$ in accordance with $[\Sigma_{X_p}, \Sigma_{U_p} ,\Sigma_{Y_p},\Sigma_{Y_f} ]$.
		The necessary and sufficient condition for the existence of a stable $\Sigma_{X_p}$ is the detectability of the pair $ (X_f H_{X_p}, \Delta_{X_p}) $. If $(X_fH_{X_p}, \Delta_{X_p}) $ is observable, one can select a suitable $K_1$ to place the eigenvalues of $\Sigma_{X_p}$ (poles of the observer) to any desirable positions. This provides a degree of freedom for designing observer convergence.
	\end{remark}
	
	\begin{remark}
		It is noteworthy that the mathematical form of the data-based observer in Theorem \ref{thm:data_driven observer} aligns closely with prior data-driven observer frameworks for standard LTI systems \cite{turan2021data,disaro2024equivalence}. As seen in the next section, this resemblance also holds in the UIO case.  Our analysis reveals a significant unification: the derived data-based observer structure is inherently universal, valid for both singular and nonsingular systems. This finding is particularly appealing, as it allows for observer design based purely on data, bypassing the often challenging step of identifying system singularity \cite{moonen1992subspace,he2022identification}.
	\end{remark}
	
	Next, the following Theorem~\ref{thm:equivalence of model and data-driven} establishes the equivalence between model-based and data-driven observer design frameworks. In particular, it shows that the solvability conditions inferred solely from historical data coincide exactly with those obtained from explicit system representations. To facilitate this equivalence proof, it is mathematically essential to first project the collected data matrices onto the decoupled state coordinates $z_1(k)$ and $z_2(k)$. Consistent with the temporal constructions of $X_p$ and $X_f$, we define the corresponding transformed data matrices as follows:
	\begin{align*}
		Z_{1p} &= \begin{bmatrix} z_{1d}(0) & z_{1d}(1) & \cdots & z_{1d}(T-1) \end{bmatrix}, \\
		Z_{1f} &= \begin{bmatrix} z_{1d}(1) & z_{1d}(2) & \cdots & z_{1d}(T) \end{bmatrix}, \\
		Z_{2p} &= \begin{bmatrix} z_{2d}(0) & z_{2d}(1) & \cdots & z_{2d}(T-1) \end{bmatrix}, \\
		Z_{2f} &= \begin{bmatrix} z_{2d}(1) & z_{2d}(2) & \cdots & z_{2d}(T) \end{bmatrix}.
	\end{align*}
	\begin{theorem}\label{thm:equivalence of model and data-driven}\textbf{\textup{(Equivalence of model and data-driven conditions)}}
		Under Assumptions \ref{assump_obs} and \ref{assump:PE}, the following statements are equivalent:\\
		(i) There exists a state observer of the form (\ref{full_order_observer}) for system (\ref{2_1}).\\
		(ii) There exist matrices $\Sigma_{X_p}$, $\Sigma_{U_p}$, $\Sigma_{Y_p}$, and $\Sigma_{Y_f}$ such that (\ref{eq:his_into_observer}) holds and $\Sigma_{X_p}$ is Schur stable.\\
		(iii) \begin{equation}\label{rank_equ}
			rk\begin{bmatrix}
				\lambda X_p-X_f\\U_p\\Y_p\\Y_f
			\end{bmatrix} = rk\begin{bmatrix}
				X_p\\U_p\\Y_f
			\end{bmatrix},
			\forall \lambda \in \mathbb{C},\,\  |\lambda| \geq 1.
		\end{equation}   
		(iv) System \eqref{2_1} is detectable.  
\end{theorem}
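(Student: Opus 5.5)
The plan is to prove the cycle (ii)$\Rightarrow$(i)$\Rightarrow$(iv)$\Rightarrow$(ii) and then attach (iii) through a separate equivalence (iii)$\Leftrightarrow$(iv).

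\textbf{(ii)$\Rightarrow$(i).} This is exactly Theorem~\ref{thm:data_driven observer}.

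\textbf{(i)$\Rightarrow$(iv).} Here I would argue by contradiction, using a mode that is not detectable. Suppose \eqref{detectability} fails at some $|\lambda|\ge 1$, with a vector $v\neq 0$ such that $(\lambda E-A)v=0$ and $Cv=0$. Take $u\equiv 0$ and consider the two trajectories $x\equiv 0$ and $x(k)=\mathrm{Re}(\lambda^k v)$. The second is consistent, since $Ex(k+1)=Ax(k)$ and $Av=\lambda Ev\in Im(E)$. Both trajectories produce identical outputs ($y\equiv0$). The observer \eqref{full_order_observer} therefore yields the same $\hat x(k)$ for both, because $\hat x$ is driven only by $u$, $y$ and $\hat x(0)$. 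Convergence for both would force $\mathrm{Re}(\lambda^k v)\to 0$. This contradicts $|\lambda|\ge1$; if needed, replace $v$ by $iv$ to handle the real part.

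\textbf{(iv)$\Rightarrow$(ii).} This is the constructive step. Using \eqref{system_rewrittn}, I would look for a model-based observer $A_O=TA-LC$, $B_O^u=TB$, $B_O^y=L$, $N_O=N$. Every trajectory of \eqref{2_1} satisfies $x(k+1)=(TA-LC)x(k)+TBu(k)+Ly(k)+Ny(k+1)$, because $Ly(k)-LCx(k)=0$. Applied column-wise to the data, this gives \eqref{eq:his_into_observer} with $\Sigma$'s equal to these matrices.

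It remains to show that, for a suitable choice of $(T,N)$ with $TE+NC=I_n$, the pair $(TA,C)$ is detectable, so that some $L$ makes $TA-LC$ Schur. One has $\lambda I-TA=T(\lambda E-A)+\lambda NC$, hence
\[
\begin{bmatrix}\lambda I-TA\\ C\end{bmatrix}=\begin{bmatrix}T&\lambda N\\0&I\end{bmatrix}\begin{bmatrix}\lambda E-A\\ C\end{bmatrix}.
\]
This only gives one inequality. The reverse direction is the \textbf{main obstacle}: a vector with $Cv=0$ and $TAv=\lambda v$ need not satisfy $(\lambda E-A)v=0$ for an arbitrary choice of $T$.

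To overcome it, I would work in the coordinates \eqref{decomposition} and exploit the freedom in $[T\ N]$. The slow part contributes the modes of $(A_1,C_1)$ restricted appropriately, which are exactly the finite eigenvalues tested by \eqref{detectability}. The fast part can be made nilpotent, or even zero, using observability of $(R,C_2)$ from Assumption~\ref{assump_obs}; this is the classical result of \cite{dai1988observers}, which I would invoke, or re-derive by choosing $N$ to annihilate the fast block.

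\textbf{(iii)$\Leftrightarrow$(iv).} I would translate the data ranks into model ranks using the factorization \eqref{eq:historical_data}, $D=\mathcal{HM}\,\Phi$, with $\Phi$ of full row rank by Assumption~\ref{assump:PE}. Since $\Phi$ has full row rank, both rank expressions equal the ranks of the corresponding row blocks of $\mathcal{HM}$. Equivalently, by Lemma~\ref{lem:willems}, a left null vector of the data matrices is a linear relation valid on all system trajectories.

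Condition \eqref{rank_equ} says that the left kernel of ${\bf col}\{\lambda X_p-X_f,U_p,Y_p,Y_f\}$ is no larger than that of ${\bf col}\{X_p,U_p,Y_f\}$. Dually, I would show that the column image of $\lambda X_p-X_f$ contributes no new direction modulo the remaining rows exactly when there is no trajectory-level vector exhibiting $(\lambda E-A)v=0$, $Cv=0$. Concretely, I would compute both ranks in the $z$-coordinates. Using $rk\,{\bf col}\{X_p,U_p,Y_f\}=n_1+$(rank of the input block), which follows from \eqref{system_rewrittn} and fast observability, the defect in \eqref{rank_equ} reduces to $n-rk\begin{bmatrix}\lambda E-A\\ C\end{bmatrix}$ on the slow block.

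Alternatively, (iii)$\Leftrightarrow$(ii) can be obtained from Remark~\ref{remark3}: (ii) holds iff $(X_fH_{X_p},\Delta_{X_p})$ is detectable, and the PBH test for this pair is rewritten as \eqref{rank_equ}. I would keep this as a fallback if the direct rank computation becomes messy.
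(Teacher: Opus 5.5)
The step you flag as the main obstacle in (iv)$\Rightarrow$(ii) is not closed, and the mechanism you sketch for it fails. The rest of your route is sound: you close (ii)$\Rightarrow$(i)$\Rightarrow$(iv)$\Rightarrow$(ii) and attach (iii), whereas the paper runs (i)$\Rightarrow$(ii)$\Rightarrow$(iii)$\Rightarrow$(iv) and cites \cite{dai1988observers} for (iv)$\Rightarrow$(i). The problem is that detectability of $(TA,C)$ is not inherited for an arbitrary left inverse $[T\ N]$ of $[E^\top,C^\top]^\top$. In particular, annihilating the fast block through the choice of $(T,N)$ can create a spurious unobservable slow mode. Take $E=\begin{bmatrix}1&0\\0&0\end{bmatrix}$, $A=\begin{bmatrix}a&0\\0&1\end{bmatrix}$, $C=\begin{bmatrix}1&1\end{bmatrix}$, already in the form \eqref{decomposition} with $R=0$ and $C_1=C_2=1$. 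Here \eqref{fast_obs} holds and $rk\begin{bmatrix}\lambda E-A\\ C\end{bmatrix}=2$ for every $\lambda$, so the system is even observable. Yet $T=\begin{bmatrix}1&0\\-1&0\end{bmatrix}$, $N=\begin{bmatrix}0\\1\end{bmatrix}$ satisfies $TE+NC=I_2$ and kills the fast block, while $TA=\begin{bmatrix}a&0\\-a&0\end{bmatrix}$ and $v=[1,\,-1]^\top$ give $Cv=0$ and $(TA-LC)v=av$ for every $L$. So for $|a|\ge 1$ no Schur $TA-LC$ exists. As written, the crux of your (iv)$\Rightarrow$(ii) rests on a citation plus a mechanism that does not work.

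The missing idea is to take $T$ nonsingular. Because $rk[E^\top,C^\top]^\top=n$, there is $K$ with $E+KC$ nonsingular. To see this, bring $E$ to $\begin{bmatrix}I_r&0\\0&0\end{bmatrix}$ by invertible left/right transformations; the last $n-r$ columns of $C$ then have full column rank, so $K$ can place an identity in the zero block. Set $T=(E+KC)^{-1}$ and $N=TK$. Then $TE+NC=I_n$, and the left factor $\begin{bmatrix}T&\lambda N\\0&I\end{bmatrix}$ in your identity has determinant $\det T\neq 0$ for all $\lambda$. Hence $rk\begin{bmatrix}\lambda I-TA\\ C\end{bmatrix}=rk\begin{bmatrix}\lambda E-A\\ C\end{bmatrix}$ for every $\lambda$, and \eqref{detectability} gives an $L$ with $TA-LC$ Schur. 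In the example, $K=[0,\,1]^\top$ gives $T=\begin{bmatrix}1&0\\-1&1\end{bmatrix}$ and an observable $(TA,C)$.

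With this fix your route is complete, and in two respects cleaner than the paper's:
\begin{itemize}
\item Your (iv)$\Rightarrow$(ii) becomes constructive and citation-free, and it uses only Assumption~\ref{assump_obs}.
\item Your indistinguishability argument for (i)$\Rightarrow$(iv) uses only Definition~\ref{definition1}. The paper's (i)$\Rightarrow$(ii) tacitly assumes that an observer reproduces the recorded states with zero error, which asymptotic convergence alone does not give. For example, take $E=I$, $C=0$, $A=a$ with $|a|<1$, and $\hat x\equiv 0$.
\end{itemize}

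Your (iii)$\Leftrightarrow$(iv) is the paper's (iii)$\Rightarrow$(iv) computation. Written out, it has two parts:
\begin{itemize}
\item Eliminating $Y_f=\lambda Y_p-C(\lambda X_p-X_f)$ and using invertibility of $\lambda R-I$ gives the left rank $m+rk\begin{bmatrix}\lambda I-A_1\\ C_1\end{bmatrix}+rk(Q)$.
\item Assumption~\ref{assump:PE} together with full column rank of $[R^\top,C_2^\top]^\top$ gives the right rank $n_1+m+rk(Q)$.
\end{itemize}
Both are exact identities, so they also yield (iv)$\Rightarrow$(iii), which your cycle needs even though the paper states only one direction.
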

\begin{proof}
	$(i)\Rightarrow(ii)$:
	If there exists a state observer of the form (\ref{full_order_observer}), 
	clearly the historical data $(u_d,{x}_d,y_d)$ should satisfy \eqref{full_order_observer} with $\hat x(k)=x_d(k)$, $k\in \mathbb Z_0$, i.e., with zero estimation error. Then $(ii)$ holds with $\Sigma_{X_p}=A_O$, $\Sigma_{U_p}=B_O^u$, $\Sigma_{Y_p}=B_O^y$, and $\Sigma_{Y_f}=N_O$, in which $A_O$ is Schur stable. 
	
	
	$(ii)\Rightarrow(iii)$: To prove $(iii)$, we preliminarily show that
	\[
	\begin{aligned}
		rk\left[\begin{array}{l}X_p \\ U_p \\ Y_p \\ Y_f\end{array}\right]&=rk\left(\left[\begin{array}{ccc}I & 0 & 0 \\ 0 & I & 0 \\ C & 0 & 0 \\ 0 & 0 & I\end{array}\right]\left[\begin{array}{l}X_p \\ U_p \\ Y_f\end{array}\right]\right)
		= rk\left[\begin{array}{l}X_p \\ U_p \\ Y_f\end{array}\right]. 
	\end{aligned}
	\]
	By exploiting condition $(ii)$, we obtain
	\[\begin{aligned}
		&rk\left[\begin{array}{c}\lambda X_p-X_f \\ U_p \\ Y_p \\ Y_f\end{array}\right] \\
		& =rk\left(\left[\begin{array}{cccc}\lambda I-\Sigma_{X_p} & -\Sigma_{U_p} & -\Sigma_{Y_p} & -\Sigma_{Y_f} \\ 0 & I & 0 & 0 \\ 0 & 0 & I & 0 \\ 0 & 0 & 0 & I\end{array}\right]\left[\begin{array}{l}X_p \\ U_p \\ Y_p \\ Y_f\end{array}\right]\right).
	\end{aligned}\]
	Since $\Sigma_{X_p}$ is Schur, it follows that the rank of the matrix on the left coincides with the rank of $\begin{bmatrix}
		X_p^\top & U_p^\top & Y_p^\top & Y_f^\top
	\end{bmatrix}^\top$ for every $\lambda\in\mathbb{C}$ with $|\lambda|\ge 1$.     Therefore, condition $(iii)$ can be derived directly from condition $(ii)$.
	
	$(iii)\Rightarrow(iv)$:
	First, we evaluate the rank of the left-hand data matrix. We preliminarily show that
	\begin{align*}
		rk \begin{bmatrix} \lambda X_p - X_f \\ U_p \\ Y_p \\ Y_f \end{bmatrix} 
		&= rk \left( \begin{bmatrix} I & 0 & 0 \\ 0 & I & 0 \\ 0 & 0 & I \\ -C & 0 & \lambda I \end{bmatrix} \begin{bmatrix} \lambda X_p - X_f \\ U_p \\ Y_p \end{bmatrix} \right) \\
		&= rk \begin{bmatrix} \lambda X_p - X_f \\ U_p \\ Y_p \end{bmatrix}.
	\end{align*}
	Introducing the non-singular transformation matrix $[P_1 \quad P_2]$ and substituting the system dynamics \eqref{2_4} and \eqref{2_5}, it can be deduced that
	\begin{align*}
		&\scalebox{0.75}{$rk \begin{bmatrix} \lambda X_p - X_f \\ U_p \\ Y_p \end{bmatrix} = rk \left( \begin{bmatrix} [P_1 \ P_2] & 0 & 0\\ 0 & I & 0 \\ 0 & 0 & I \end{bmatrix} \begin{bmatrix} \lambda Z_{1p} - Z_{1f} \\ \lambda Z_{2p} - Z_{2f} \\ U_p \\ Y_p \end{bmatrix} \right) = rk \begin{bmatrix} \lambda Z_{1p} - Z_{1f} \\ \lambda Z_{2p} - Z_{2f} \\ U_p \\ Y_p \end{bmatrix}$}\\
		&= rk \scalebox{0.7}{$
			\begin{bmatrix}
				\lambda I - A_1 & -B_1 & 0 & \dots & 0 & 0 \\
				0 & -\lambda B_2 & -(\lambda R-I)B_2 & \dots & -(\lambda R-I)R^{s-2}B_2 & -(\lambda R-I)R^{s-1}B_2 \\
				0 & I_m & 0 & \dots & 0 & 0\\
				C_1 & -C_2B_2 & -C_2RB_2 & \dots & -C_2R^{s-1}B_2 & 0
			\end{bmatrix}
			$} \\
		&= \scalebox{0.75}{$m + rk \begin{bmatrix} \lambda I - A_1 & 0 \\ 0 & -(\lambda R-I)Q \\ C_1 & -C_2RQ \end{bmatrix}$},
	\end{align*}
	where $Q = [B_2 \quad RB_2 \quad \dots \quad R^{s-1}B_2]$ and $R^s=0$ has been used. By exploiting the following algebraic identity:
	\begin{equation*}
		\scalebox{0.9}{$\begin{bmatrix} \lambda I - A_1 & 0 \\ 0 & -(\lambda R-I)Q \\ C_1 & -C_2RQ \end{bmatrix} =  \begin{bmatrix} I_{n_1} & 0 & 0 \\ 0 & -(\lambda R-I) & 0 \\ 0 & -C_2R & I_p \end{bmatrix}\begin{bmatrix} \lambda I - A_1 & 0 \\ 0 & Q \\ C_1 & 0 \end{bmatrix}$},
	\end{equation*}
	one obtains the rank of the left-hand matrix in \eqref{rank_equ} as
	\begin{equation} \label{eq:left_rank}
		rk \begin{bmatrix} \lambda X_p - X_f \\ U_p \\ Y_p \\ Y_f \end{bmatrix} = m + rk \begin{bmatrix} \lambda I - A_1 \\ C_1 \end{bmatrix} + rk(Q).
	\end{equation}
	Next, we consider the right-hand data matrix. In view of \eqref{eq:historical_data}, we have 
	
	
	\begin{equation}\label{relation}
		\scalebox{0.8}{$	\begin{bmatrix} Z_{1p} \\ Z_{2p} \\ U_p \\ Y_f \end{bmatrix}\!\!=\!\!  \begin{bmatrix} 
				I_{n_1} & 0 &  \dots & 0 & 0\\ 
				0 & -B_2 &  \dots & -R^{s-1}B_2 & 0 \\ 
				0 & I_m &  \dots & 0 & 0\\ 
				C_1A_1 & C_1B_1 &  \dots & -C_2R^{s-2}B_2 & -C_2R^{s-1}B_2 	\end{bmatrix}\begin{bmatrix} H_1(z_{1d[0,T-1]}) \\ H_{s+1}(u_{d[0,T+s-1]}) \end{bmatrix}.$}
	\end{equation}
	
	Through some elementary operations, it holds that
	\begin{equation}
		\begin{aligned} \label{key-eq}
			&rk \begin{bmatrix} X_p \\ U_p \\ Y_f \end{bmatrix} = rk \begin{bmatrix} Z_{1p} \\ Z_{2p} \\ U_p \\ Y_f \end{bmatrix} \\
			&={\tiny{ rk \begin{bmatrix} 
						I_{n_1} & 0 & 0 & \dots & 0 & 0\\ 
						0 & -B_2 & -RB_2 & \dots & -R^{s-1}B_2 & 0 \\ 
						0 & I_m & 0 & \dots & 0 & 0\\ 
						C_1A_1 & C_1B_1 & -C_2B_2 & \dots & -C_2R^{s-2}B_2 & -C_2R^{s-1}B_2 
			\end{bmatrix}}} \\
			&= n_1 + m + rk\left( \begin{bmatrix} R \\ C_2 \end{bmatrix} Q \right),
		\end{aligned}
	\end{equation}
	where the second equality is due to \eqref{relation} and Assumption \ref{assump:PE}.
	Recall that by Assumption \ref{assump_obs}, $\bigl[E^\top, C^\top \bigr]^\top$ has full column rank, implying that $\bigl[R^\top, C_2^\top \bigr]^\top$ is of full column rank. Consequently, pre-multiplying by this full column rank matrix does not alter the rank of $Q$, leading to:
	\begin{equation} \label{eq:right_rank}
		rk \bigl[ X_p^\top, U_p^\top, Y_f^\top \bigr]^\top= n_1 + m + rk(Q).
	\end{equation}
	Substituting \eqref{eq:left_rank} and \eqref{eq:right_rank} into the equality of condition $(iii)$ yields:
	\begin{equation}
		m + rk \begin{bmatrix} \lambda I - A_1 \\ C_1 \end{bmatrix} + rk(Q) = n_1 + m + rk(Q).
	\end{equation}We arrive at the detectability condition for the slow subsystem:
	\begin{equation*}
		rk \begin{bmatrix} \lambda I - A_1 \\ C_1 \end{bmatrix} = n_1.
	\end{equation*}
	Since the fast subsystem matrix $(\lambda R-I)$ is inherently non-singular for any $\lambda \in \mathbb{C}$, it can be concluded that the PBH detectability matrix for the full-order system satisfies:
	\begin{equation}
		rk \begin{bmatrix} \lambda I - A_1 & 0 \\ 0 & \lambda R - I \\ C_1 & C_2 \end{bmatrix}= rk\begin{bmatrix}
			\lambda E-A\\C
		\end{bmatrix} = n_1 + n_2 = n.
	\end{equation}
	This completes the proof of $(iii) \Rightarrow (iv)$.
	
	$(iv)\Rightarrow(i)$:
	If condition $(iv)$ is satisfied, the existence of a state observer with the structure defined in (\ref{full_order_observer}) is guaranteed by Theorem 4.1 in \cite{dai1988observers}.
\end{proof}
\begin{remark}
	Since neither $[X_p^\top,U^\top_p,Y^\top_f]^\top$ nor $[X_p^\top,U^\top_p]^\top$ is of full row rank, the proof of the implication from (iii) to (iv) in the above theorem is considerably more involved than the corresponding proof for non-singular systems in \cite{disaro2024equivalence}. Moreover, as $Ker([U_p^\top,X_p^\top]^\top)\subseteq Ker(Y_f)$ does {\emph{not}} hold, $Y_f$ cannot be omitted from the right-hand side of condition \eqref{rank_equ}. This distinguishes condition (iii) from the corresponding data-based condition for non-singular systems presented in \cite[Theorem~9]{disaro2024equivalence}.
\end{remark}

Note that the observability of system \eqref{2_1} is sufficient for Assumption \ref{assump_obs} and the detectability condition \eqref{detectability}. This means existing data-based sufficient conditions for the observability of system \eqref{2_1} (see \cite{wang2025data,wang2026data}) can be used as existence conditions for the data-based observer \eqref{data-driven-observer}. In what follows, we provide a relaxed data-based condition for the satisfaction of Assumption \ref{assump_obs}, with which we can directly verify this core assumption from data without system model information. 
\begin{corollary}[Data-based test for Assumption \ref{assump_obs}] \label{aump_1_checking}
	Condition \eqref{fast_obs} holds if
	\begin{equation}\label{assump1-condition}
		rk([X_p^\top,U_p^\top,Y_f^\top]^\top)=n+m.
	\end{equation}
	Moreover, if system \eqref{2_1} is C-controllable and the data satisfies Assumption \ref{assump:PE}, then condition \eqref{fast_obs} holds if and only if \eqref{assump1-condition} holds.
\end{corollary}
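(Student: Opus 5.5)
The plan is to prove the two parts separately: the sufficiency part uses only that the data come from \eqref{2_1}, and the equivalence part reuses the rank formula \eqref{key-eq} from the proof of Theorem~\ref{thm:equivalence of model and data-driven}.

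\emph{Sufficiency.} Suppose \eqref{assump1-condition} holds but \eqref{fast_obs} fails. Then there is a nonzero $v\in\mathbb{R}^n$ with $Ev=0$ and $Cv=0$. The data satisfy $EX_f=AX_p+BU_p$ and $Y_f=CX_f$. Hence
\begin{equation*}
\begin{bmatrix} X_p\\U_p\\Y_f\end{bmatrix}
=\begin{bmatrix} I&0\\0&I\\ C\,W & C\,V\end{bmatrix}\begin{bmatrix}X_p\\U_p\end{bmatrix}+\text{(terms involving the part of $X_f$ not determined by $(X_p,U_p)$)}.
\end{equation*}
This linear-algebra form is awkward, so I would argue on row space instead. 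Because $[E^\top,C^\top]^\top$ loses column rank, the rows of $Y_f$ cannot supply enough new directions. More precisely, in the coordinates \eqref{decomposition}, $Z_{1f}=A_1Z_{1p}+B_1U_p$ is determined by $(Z_{1p},U_p)$. Also $Z_{2p}$ is a function of future inputs, so $Y_f=C_1Z_{1f}+C_2Z_{2f}$ contributes at most $\mathrm{rk}(C_2Z_{2f}\text{ modulo rowspace}[Z_{1p};Z_{2p};U_p])$.

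Cleaner route: use \eqref{key-eq}, which only used \eqref{relation} and holds as an inequality $\le$ without Assumption~\ref{assump:PE}. This gives
\[
\mathrm{rk}[X_p^\top,U_p^\top,Y_f^\top]^\top\le n_1+m+\mathrm{rk}\Bigl(\begin{bmatrix}R\\C_2\end{bmatrix}Q\Bigr)\le n_1+m+\mathrm{rk}\begin{bmatrix}R\\C_2\end{bmatrix}.
\]
If \eqref{assump1-condition} holds, the right-hand side is at least $n+m$. This forces $\mathrm{rk}[R^\top,C_2^\top]^\top=n_2$, which is equivalent to \eqref{fast_obs} via the block form $S E P$ and $CP$. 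The main thing to check is that the $\le$ version of \eqref{key-eq} holds for arbitrary data generated by \eqref{2_1}. I expect it does, since \eqref{relation} is an exact factorisation through the Hankel matrix and the rank of a product is at most the rank of the left factor. The row manipulations in \eqref{key-eq} (eliminating $C_1A_1$, $C_1B_1$ using the first and third block rows, and $-C_2R^{i}B_2$ using the second) are left multiplications by invertible matrices, so they preserve the bound.

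\emph{Necessity under C-controllability and Assumption~\ref{assump:PE}.} With Assumption~\ref{assump:PE}, \eqref{key-eq} holds with equality. Under \eqref{fast_obs}, $[R^\top,C_2^\top]^\top$ has full column rank, so the rank equals $n_1+m+\mathrm{rk}(Q)$ as in \eqref{eq:right_rank}. C-controllability gives $\mathrm{rk}(Q)=n_2$, hence the rank is $n+m$, which is \eqref{assump1-condition}.

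The main obstacle is the sufficiency direction without C-controllability or Assumption~\ref{assump:PE}. There one must make sure the factorisation and the row-elimination step are valid for general data, rather than relying on the full row rank of the Hankel matrix.
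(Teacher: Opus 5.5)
Your proposal is correct and is essentially the paper's proof. Sufficiency comes from the factorisation \eqref{relation}, which for any system-generated data gives $rk([X_p^\top,U_p^\top,Y_f^\top]^\top)\le n_1+m+rk([R^\top,C_2^\top]^\top Q)\le n_1+m+rk([R^\top,C_2^\top]^\top)$; necessity comes from the equality in \eqref{key-eq} under Assumption~\ref{assump:PE}, together with $rk(Q)=n_2$ from C-controllability.

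One slip is only cosmetic, because that step is a data-independent identity: the reduction in \eqref{key-eq} uses the $I_m$ row to clear the $-B_2$ block of the second block row, leaving $-RQ$ and $-C_2Q$; it does not eliminate the $-C_2R^iB_2$ blocks with the second row.
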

\begin{proof}
	From \eqref{relation}, to make $rk([X_p^\top,U_p^\top,Y_f^\top]^\top)=n+m$, it must hold that $rk([R^\top,C_2^\top]^\top)=n_2$, which leads to condition \eqref{fast_obs}. If system \eqref{2_1} is C-controllable and Assumption \ref{assump:PE} holds, then from \eqref{key-eq}, $rk([X_p^\top,U_p^\top,Y_f^\top]^\top)=n_1+m+rk([R^\top,C_2^\top]^\top)$, noting that $Q\doteq [B_2,\cdots, R^{s-1}B_2]$ has full row rank therein. This leads to the necessity of condition \eqref{assump1-condition} for $rk([R^\top,C_2^\top]^\top)=n_2$. 
\end{proof}

\section{Observer design with unknown inputs}\label{sec: uio}
This section extends the proposed data-driven framework to descriptor systems affected by unknown inputs. Consider the linear discrete-time descriptor system
\begin{subequations}\label{3_1}
	\begin{align}
		Ex(k+1) &= Ax(k)+Bu(k)+F\eta(k), \label{3_1a}\\
		y(k) &= Cx(k), \label{3_1b}
	\end{align}
\end{subequations}
where $\eta(k)\in\mathbb{R}^q$ denotes the unknown input. Without loss of generality, we assume that $F\in\mathbb{R}^{n\times q}$ has full column rank, i.e., $rk(F)=q$. Consistent with the decomposition introduced previously, $F$ is conformably partitioned as $SF=[F_1^\top,F_2^\top]^\top$, where $F_1\in\mathbb{R}^{n_1\times q}$ and $F_2\in\mathbb{R}^{n_2\times q}$.

To establish the data-driven representation for system \eqref{3_1}, we treat $\eta(k)$ as an unmeasured component of the system input. Define the extended input vector $\tilde{u}(k)$ as $ \tilde{u}(k) \triangleq [u(k)^\top,\eta(k)^\top]^\top \in \mathbb{R}^{m+q}$. Suppose that the historical data sequences $\{u_d(k), y_d(k), x_d(k)\}$ are collected from system \eqref{3_1} subject to the unknown input sequence $\{\eta_d(k)\}$. Accordingly, we introduce the required data informativity assumption for the system subject to unknown inputs.
\begin{assumption}\label{assump:PE_uio}
	The matrix
	$ \begin{bmatrix} H_1(z_{1d,[0,T-1]}) \\ H_{s+1}(\tilde{u}_{d,[0,T+s-1]}) \end{bmatrix} =
	\begin{bmatrix} H_1(z_{1d,[0,T-1]}) \\ H_{s+1}(u_{d,[0,T+s-1]}) \\ H_{s+1}(\eta_{d,[0,T+s-1]}) \end{bmatrix}$
	is of full row rank, i.e., with rank $n_1+(m+q)(s+1)$.
\end{assumption}
Analogous to Remark \ref{rem:rank_condition_justification}, even though $\eta(k)$ and $z_1(k)$ are usually unmeasurable, this assumption is guaranteed if system \eqref{3_1} with the augmented input matrix $[B,F]$ is R-controllable and $\{\tilde u_d(k)\}$ is persistently exciting of order $n_1+s+1$.  The latter can be satisfied with high probability if $\{\tilde u_d(k)\}$ is sufficiently random and long.  Under this assumption, Lemma \ref{lem:willems} can be generalized to the case with unknown inputs.
We focus on the set of all feasible trajectories generated by system \eqref{3_1}. Distinct from the standard case in Section~\ref{sec: full order}, we define the set of all {\emph{ compatible}} (input/output/state) trajectories that are generated by system \eqref{3_1} as
\[
\begin{aligned}
	\mathbb{T}_{\Sigma_2} \triangleq \Big\{ & \left(\{u(k)\}_{k\in\mathbb{Z}_0},\{y(k)\}_{k\in\mathbb{Z}_0},\{x(k)\}_{k\in\mathbb{Z}_0}\right) : \\
	& \exists \{\eta(k)\}_{k\in\mathbb{Z}_0} \text{ s.t. }
	\big(\{u(k)\}_{k\in\mathbb{Z}_0}, \{x(k)\}_{k\in\mathbb{Z}_0}, \\
	& \{y(k)\}_{k\in\mathbb{Z}_0}, \{\eta(k)\}_{k\in\mathbb{Z}_0}\big) \text{ satisfies \eqref{3_1}} \Big\}.
\end{aligned}
\]
\begin{lemma}\label{lem:willems_uio_set_equivalence}\textbf{\textup{(Trajectory equivalence with unknown inputs)}}
	Under Assumption \ref{assump:PE_uio}, the set of all trajectories generated by system \eqref{3_1} exactly coincides with the set of trajectories compatible with the historical data collected under unknown inputs (i.e., $\mathbb{T}_{c}(u_d, y_d, x_d)$ as per Definition \ref{definition:compatible trajectory}):
	\begin{equation}\label{eq:equivalence_uio}
		\mathbb{T}_{\Sigma_2} = \mathbb{T}_{c}(u_d, y_d, x_d).
	\end{equation}
\end{lemma}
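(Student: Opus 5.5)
The plan is to mirror the proof of Lemma \ref{lem:willems}, treating $\tilde u(k)=[u(k)^\top,\eta(k)^\top]^\top$ as an extended input for the pencil $(E,A,[B,F],C)$. Under the decomposition \eqref{decomposition}, this system has slow dynamics driven by $[B_1,F_1]$ and fast dynamics driven by $[B_2,F_2]$.

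The first step is to establish a factorization analogous to \eqref{eq:historical_data}:
\[
{\bf col}\{X_p,X_f,U_p,Y_p,Y_f\}=\tilde{\mathcal H}\tilde{\mathcal M}\begin{bmatrix} H_1(z_{1d,[0,T-1]}) \\ H_{s+1}(\tilde u_{d,[0,T+s-1]}) \end{bmatrix}.
\]
Here $\tilde{\mathcal M}$ is obtained from $\mathcal M$ by replacing $B_i$ with $[B_i,F_i]$ in the state and output block rows. The $U_p$ block row becomes $[0,\,[I_m\;0],\,0,\cdots,0]$, which selects only the known part of $\tilde u(k)$. This factorization follows directly from \eqref{2_4}--\eqref{2_5}, with $z_2(k)=-\sum_{i=0}^{s-1}R^i[B_2,F_2]\tilde u(k+i)$ and $z_1(k+1)=A_1z_1(k)+[B_1,F_1]\tilde u(k)$.

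For the inclusion $\mathbb T_c\subseteq\mathbb T_{\Sigma_2}$, take a compatible trajectory. For each $k$, pick $g_1$ with ${\bf col}\{x(k),x(k+1),u(k),y(k),y(k+1)\}={\bf col}\{X_p,\dots,Y_f\}g_1$. Then define $z_1(k)$, $\tilde u(k),\dots,\tilde u(k+s)$ through the Hankel block times $g_1$, exactly as in \eqref{eq:assumption_PE}. The $\eta$-part of this $\tilde u(k)$ supplies the candidate $\eta(k)$. By the factorization, the tuple $(x(k),x(k+1),u(k),y(k),y(k+1))$ satisfies $Ex(k+1)=Ax(k)+Bu(k)+F\eta(k)$, $y(k)=Cx(k)$ and $y(k+1)=Cx(k+1)$. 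This holds because every column of the data matrix satisfies these linear relations with its own $\eta_d$, and linear combinations preserve them.

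The main obstacle sits in this direction. The $\eta(k)$ obtained at step $k$ must produce one globally consistent sequence $\{\eta(k)\}$ making the whole trajectory satisfy \eqref{3_1}. Since \eqref{3_1} only constrains consecutive pairs $(x(k),x(k+1))$ through $\eta(k)$, I would define $\eta(k)$ pointwise from the step-$k$ choice of $g_1$. The fact that different $g_1$ at different $k$ give unrelated "future inputs" $\tilde u(k+1),\dots$ is harmless, because membership in $\mathbb T_{\Sigma_2}$ only requires existence of some $\eta$ satisfying the one-step equations. Non-uniqueness of $\eta$ is likewise irrelevant, since $\mathbb T_{\Sigma_2}$ quantifies existentially.

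For the converse $\mathbb T_{\Sigma_2}\subseteq\mathbb T_c$, take a trajectory of \eqref{3_1} with some $\eta$. Map the trajectory to its coordinates $z_1(k)$ and $\tilde u(k),\dots,\tilde u(k+s)$ using the decomposition. If $\eta$ is defined only such that the equations hold, the fast part dictates $z_2(k)$ via the extended inputs, as in \eqref{2_5}. By Assumption \ref{assump:PE_uio}, the Hankel block has full row rank $n_1+(m+q)(s+1)$, so some $g_1$ reproduces this vector. Multiplying by $\tilde{\mathcal H}\tilde{\mathcal M}$ and invoking the factorization gives \eqref{compatible trajectory}, which completes the proof.
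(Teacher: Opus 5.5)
Your proposal is correct and follows essentially the same route as the paper. Both arguments factor the data matrix through $\mathcal H\bar{\mathcal M}$ and the stacked Hankel matrix. Both define $(z_1(k),\tilde u(k),\dots,\tilde u(k+s))$ from the coefficient vector to get $\mathbb T_c\subseteq\mathbb T_{\Sigma_2}$, and both use the full row rank in Assumption \ref{assump:PE_uio} for the converse. Your version is slightly more careful in two places: in $\bar{\mathcal M}$ you also replace $B_1$ by $[B_1,F_1]$ and set the $U_p$ row to $[0,\,[I_m\;0],\,0,\cdots,0]$ (the paper mentions only $B_2$), and you explain explicitly why defining $\eta(k)$ pointwise from the step-$k$ coefficient vector is enough.
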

\begin{proof}
	Let $\left( \{u(k)\}_{k \in \mathbb{Z}_{0}}, \{y(k)\}_{k \in \mathbb{Z}_{0}}, \{x(k)\}_{k \in \mathbb{Z}_{0}} \right) \in \mathbb{T}_c$. By Definition \ref{definition:compatible trajectory}, there exists a coefficient vector $g_2 \in \mathbb{R}^{T}$ such that
	\begin{equation}\label{eq:compatible_uio} \begin{array}{c} {\bf col}\{ x(k), x(k+1), u(k), y(k),y(k+1)\}\\={\bf col} \{ X_p,X_f,U_p,Y_p,Y_f\}g_2 \end{array}\end{equation}
	Since the historical data is inherently generated by the descriptor system \eqref{3_1} under certain unknown inputs, it satisfies the system dynamics over the collected horizon:
	\begin{equation}\label{eq:historical_data_uio}
		[X_p^\top, X_f^\top, U_p^\top, Y_p^\top, Y_f^\top]^\top = \mathcal{H}\bar{\mathcal{M}}
		\begin{bmatrix}
			H_1(z_{1d,[0,T-1]}) \\ H_{s+1}(\tilde{u}_{d,[0,T+s-1]})
		\end{bmatrix},
	\end{equation}
	where $\bar {\mathcal M}$ is obtained from $\mathcal M$ in the proof of Lemma \ref{lem:willems} by changing $B_2$ to the augmented input matrix $[B_2,F_2]$ and $\mathcal H$ is defined after \eqref{eq:historical_data}.
	
	Define a state-input sequence $(z_1(k), u(k), \eta(k), \cdots, u(k+s), \eta(k+s))$ as the linear combination of the historical Hankel matrices using the same vector $g_2$:
	{\small{ \begin{equation}\label{eq:assumption_PE_uio_define}
				{\bf{col}}\{z_1(k), u(k), \cdots, u(k+s), \eta(k+s)\}\!=\! \begin{bmatrix} H_1(z_{1d,[0,T-1]}) \\ H_{s+1}(\tilde{u}_{d,[0,T+s-1]}) \end{bmatrix} g_2,
	\end{equation}}}
	Substituting \eqref{eq:historical_data_uio} and \eqref{eq:assumption_PE_uio_define} into \eqref{eq:compatible_uio}, we obtain
	\begin{equation}\label{eq:system_uio}
		\begin{bmatrix}
			x(k)\\ x(k+1) \\ u(k)\\ y(k)\\y(k+1)
		\end{bmatrix} \!=\! \mathcal{H}\bar{\mathcal{M}}\begin{bmatrix} z_1(k) \\ u(k) \\ \eta(k) \\ \vdots \\ u(k+s)\\ \eta(k+s) \end{bmatrix},
	\end{equation}
	which confirms that the trajectory $\left( \{u(k)\}, \{y(k)\}, \{x(k)\} \right)$ is generated by system \eqref{3_1}. Hence, the trajectory belongs to $\mathbb{T}_{\Sigma_2}$, proving the inclusion $\mathbb{T}_c(u_d, y_d, x_d) \subseteq \mathbb{T}_{\Sigma_2}$.
	
	Conversely, arbitrarily select a trajectory $( \{u(k)\}_{k \in \mathbb{Z}_{0}},$\\$\{y(k)\}_{k \in \mathbb{Z}_{0}}, \{x(k)\}_{k \in \mathbb{Z}_{0}}) \in \mathbb{T}_{\Sigma_2}$. This trajectory intrinsically satisfies the system dynamics formulated in \eqref{eq:system_uio} for some $\{\eta(k)\}_{k\in \mathbb{Z}_0}$. Under Assumption \ref{assump:PE_uio}, there exists a vector $g_2 \in \mathbb{R}^T$ such that \eqref{eq:assumption_PE_uio_define} holds. Substituting \eqref{eq:historical_data_uio} and \eqref{eq:assumption_PE_uio_define} backwards into \eqref{eq:system_uio}, it follows directly that the trajectory satisfies the data compatibility condition \eqref{eq:compatible_uio}. This implies the reverse inclusion $\mathbb{T}_{\Sigma_2} \subseteq \mathbb{T}_c(u_d, y_d, x_d)$.
	Consequently, $\mathbb{T}_{\Sigma_2} = \mathbb{T}_c(u_d, y_d, x_d)$ is established.
\end{proof}

According to Lemma \ref{lem:willems_uio_set_equivalence}, given any compatible trajectory  $[
x(k)^\top, x(k+1)^\top, u(k)^\top, y(k)^\top, y(k+1)^\top
]^\top$ of system \eqref{3_1}, there exists a vector $g_2 \in \mathbb{R}^{T}$ such that
\begin{equation}\label{eq:data_equation_g2}
	\begin{array}{c} {\bf col}\{ x(k), x(k+1), u(k), y(k),y(k+1)\}\\={\bf col} \{ X_p,X_f,U_p,Y_p,Y_f\}g_2 \end{array}
\end{equation}
To predict $x(k+1)$ via $x(k+1) = X_f g_2$, we first determine the coefficient vector $g_2$ by solving the remaining equations, whose general solution is given by
\begin{equation}\label{eq:solve_g2_2}
	g_2 = D_2^{\dagger}\begin{bmatrix}
		x(k)\\u(k)\\y(k)\\y(k+1)
	\end{bmatrix}+\Omega_2\beta_2,
\end{equation}
where $D_2 = [ X_p^\top, U_p^\top, Y_p^\top, Y_f^\top ]^\top$, $D_2^{\dagger}$ denotes the Moore-Penrose pseudoinverse of matrix $D_2$, the columns of matrix $\Omega_2$ form a basis for $Ker(D_2)$, and $\beta_2$ is an arbitrary matrix of appropriate dimensions.

\begin{lemma}\label{lem:compatible trajectory}\textbf{\textup{(Conditions for compatible trajectories)}} Under Assumption \ref{assump:PE_uio},
	a dynamical system of the form \eqref{full_order_observer} can generate all compatible input-output-state trajectories $(\{u(k)\}_{k \in \mathbb{Z}_{0}}, \{y(k)\}_{k \in \mathbb{Z}_{0}}, \{x(k)\}_{k \in \mathbb{Z}_{0}})$ realizable by system \eqref{3_1} if and only if the following kernel inclusion condition holds:
	\begin{equation}\label{kernel inclusion}
		Ker(X_f) \supseteq Ker\left(D_2\right).
	\end{equation}
	Equivalently, this condition guarantees the existence of the coefficient matrices $(\bar{\Sigma}_{X_p}, \bar{\Sigma}_{U_p}, \bar{\Sigma}_{Y_p}, \bar{\Sigma}_{Y_f})$ satisfying:
	\begin{equation}\label{eq:his_into_observer2}
		X_f = [\bar{\Sigma}_{X_p}, \bar{\Sigma}_{U_p}, \bar{\Sigma}_{Y_p}, \bar{\Sigma}_{Y_f}][ X_p^\top, U_p^\top, Y_p^\top, Y_f^\top ]^\top.
	\end{equation}
\end{lemma}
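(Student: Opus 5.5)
The argument splits into two parts. First, the kernel inclusion \eqref{kernel inclusion} is equivalent to the solvability of \eqref{eq:his_into_observer2}. Second, solvability of \eqref{eq:his_into_observer2} is equivalent to the existence of a system of the form \eqref{full_order_observer} that reproduces every trajectory in $\mathbb{T}_{\Sigma_2}$. Lemma \ref{lem:willems_uio_set_equivalence} will be used to pass between $\mathbb{T}_{\Sigma_2}$ and the data-compatible set $\mathbb{T}_c(u_d,y_d,x_d)$.

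\textbf{Step 1 (pure linear algebra).} A linear equation $X_f = \Sigma D_2$ is solvable in $\Sigma$ if and only if every row of $X_f$ lies in the row space of $D_2$. This holds if and only if $Ker(D_2)\subseteq Ker(X_f)$. Equivalently, $X_f(I-D_2^{\dagger}D_2)=0$, and in that case $\Sigma=X_fD_2^{\dagger}$ is one solution. This settles the ``equivalently'' part of the statement.

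\textbf{Step 2, sufficiency.} Assume \eqref{kernel inclusion}, and take any trajectory in $\mathbb{T}_{\Sigma_2}$. By Lemma \ref{lem:willems_uio_set_equivalence}, at each $k$ there is a $g_2$ satisfying \eqref{eq:data_equation_g2}. Its general form is \eqref{eq:solve_g2_2}. Since $X_f\Omega_2=0$, we get
\[
x(k+1)=X_fg_2=X_fD_2^{\dagger}\,{\bf col}\{x(k),u(k),y(k),y(k+1)\}.
\]
Hence the system \eqref{full_order_observer} with $[A_O,B_O^u,B_O^y,N_O]=[\bar\Sigma_{X_p},\bar\Sigma_{U_p},\bar\Sigma_{Y_p},\bar\Sigma_{Y_f}]$, started at $\hat x(0)=x(0)$, reproduces $x(k)$ exactly for all $k$ by induction.

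\textbf{Step 2, necessity.} Suppose some $(A_O,B_O^u,B_O^y,N_O)$ generates every trajectory in $\mathbb{T}_{\Sigma_2}$. The historical data $(u_d,y_d,x_d)$ form a trajectory segment of \eqref{3_1}, driven by $\eta_d$, so they lie in $\mathbb{T}_{\Sigma_2}$. Applying the observer relation columnwise for $k=0,\dots,T-1$ gives \eqref{eq:his_into_observer2} with $\bar\Sigma$ equal to the observer matrices. Step 1 then yields \eqref{kernel inclusion}.

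\textbf{Main obstacle.} The delicate point is that the data are only a finite segment, whereas $\mathbb{T}_{\Sigma_2}$ consists of infinite trajectories. To place the data in $\mathbb{T}_{\Sigma_2}$, I would extend the data segment to an infinite trajectory: choose future $u$ and $\eta$, which is possible because the fast part is determined by future inputs, as in \eqref{2_5}.

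An alternative route avoids this extension. For any $g\in Ker(D_2)$, I would use Assumption \ref{assump:PE_uio} to build a system trajectory with ${\bf col}\{x(k),u(k),y(k),y(k+1)\}=0$ but $x(k+1)=X_fg$. A generating system of the form \eqref{full_order_observer} would then force $x(k+1)=0$, so $X_fg=0$. I would present this second argument, since it uses Lemma \ref{lem:willems_uio_set_equivalence} directly.
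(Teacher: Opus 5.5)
Your proof is correct. Your Step~2 (sufficiency) is the paper's $(\Longleftarrow)$ argument, and Step~1 just makes explicit the ``equivalently'' clause that the paper takes for granted. Your first necessity argument is exactly the paper's $(\Longrightarrow)$: the data form an initial segment of a trajectory in $\mathbb{T}_{\Sigma_2}$, so applying the observer relation columnwise gives \eqref{eq:his_into_observer2} with $\bar\Sigma$ equal to the observer matrices.

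The kernel-vector argument you say you would present is a genuinely different route to necessity. Instead of showing that the given $(A_O,B_O^u,B_O^y,N_O)$ solve \eqref{eq:his_into_observer2}, you show $X_f g=0$ for each $g\in Ker(D_2)$, using a trajectory whose observer inputs all vanish at one instant. It is valid, with two caveats.

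\emph{It does not avoid an extension.} Lemma~\ref{lem:willems_uio_set_equivalence} concerns infinite trajectories, so you still must continue the window to a full trajectory. The simplest way is to work at $k=0$: set $z_1(0)=H_1(z_{1d,[0,T-1]})g$, take $\tilde u(0),\dots,\tilde u(s)$ from the blocks of $H_{s+1}(\tilde u_{d,[0,T+s-1]})g$, and choose later inputs arbitrarily. By \eqref{eq:historical_data_uio} and linearity, the window at $k=0$ is then $(X_pg,X_fg,U_pg,Y_pg,Y_fg)$. This uses neither Assumption~\ref{assump:PE_uio} nor Lemma~\ref{lem:willems_uio_set_equivalence}. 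The extension in the paper's route is equally trivial, because $x_d(0),\dots,x_d(T)$ depend only on $z_{1d}(0)$ and on $\tilde u_d$ up to time $T+s-1$. So your ``main obstacle'' is mild, and it is the only point where you add rigor the paper glosses over.

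\emph{It yields a weaker conclusion.} Your route delivers only the inclusion \eqref{kernel inclusion}, and hence some solution $\bar\Sigma$ via Step~1. The paper's columnwise route identifies the generating system's own matrices as a solution of \eqref{eq:his_into_observer2}. That sharper fact is what the necessity part of Theorem~\ref{thm:full_order_data-driven_UIO} reuses to conclude that $\bar\Sigma_{X_p}=A_O$ is Schur. So the paper's version is more directly reusable, while yours isolates why the inclusion is forced by linearity alone.
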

\begin{proof}
	$(\Longleftarrow)$:
	If the kernel inclusion condition \eqref{kernel inclusion} holds, then for any vector $g_2$  satisfying \eqref{eq:solve_g2_2}, the state \( x(k+1) \) is uniquely determined by $X_fg_2$. Consequently, despite the presence of unknown inputs \( \eta(k) \), the state \( x(k+1) \) for any compatible trajectory can be expressed as
	\begin{equation}\label{data-driven-dynamics_2}
		x(k+1) = X_fD_2^{\dagger} [x(k)^\top, u(k)^\top, y(k)^\top, y(k+1)^\top ]^\top.
	\end{equation}
	
	Substituting \eqref{eq:his_into_observer2} into \eqref{data-driven-dynamics_2}, it follows
	\begin{equation}\label{eq:final2}
		x(k+1) = \begin{bmatrix} \bar{\Sigma}_{X_p} & \bar{\Sigma}_{U_p} & \bar{\Sigma}_{Y_p} & \bar{\Sigma}_{Y_f} \end{bmatrix}\begin{bmatrix}
			x(k)\\u(k)\\y(k)\\y(k+1)
		\end{bmatrix}.
	\end{equation}
	Define the observer matrices as $A_{O} = \bar{\Sigma}_{X_p},  B_{O}^{u} = \bar{\Sigma}_{U_p},  B_{O}^{y} = \bar{\Sigma}_{Y_p}, $ and $N_{O} = \bar{\Sigma}_{Y_f}$. The relationship among the elements of the tuple $(x(k), x(k+1), u(k), y(k), y(k+1))$, governed by the data-driven dynamics \eqref{eq:final2}, is equivalent to the state-space structure of observer \eqref{full_order_observer} under this matrix assignment. Consequently, system \eqref{full_order_observer} parametrized by these matrices generates all compatible trajectories of the original system \eqref{3_1}. 
	
	$(\Longrightarrow)$:
	Since the system \eqref{full_order_observer} is capable of generating all trajectories compatible with the historical data of system \eqref{3_1},  \((X_p, X_f, U_p, Y_p, Y_f)\) inherently constitutes a valid trajectory of \eqref{full_order_observer}. By substituting this trajectory into the state-update equation \eqref{full_order_observer}, the kernel inclusion condition \eqref{kernel inclusion} is directly satisfied. This completes the proof.
\end{proof}
In the following, we investigate the existence for a UIO and develop a data-driven framework for state estimation in the presence of unknown inputs.
\begin{theorem}[Full-order data-driven UIO]\label{thm:full_order_data-driven_UIO}
	Suppose that Assumption \ref{assump:PE_uio} holds, and system \eqref{3_1} is subject to unknown inputs. The observer matrices of \eqref{full_order_observer} are then constructed as
	\begin{equation}\label{full_order data-driven UIO matrix}
		\begin{split}
			&A_{O} = \bar{\Sigma}_{X_p},\quad B_{O}^{u} = \bar{\Sigma}_{U_p},\\
			&B_{O}^{y} = \bar{\Sigma}_{Y_p},\quad N_{O} = \bar{\Sigma}_{Y_f},
		\end{split}
	\end{equation}
	which are defined in \eqref{eq:his_into_observer2}.
	Then, a full-order UIO of the form \eqref{full_order_observer} exists if and only if condition \eqref{kernel inclusion} holds and $\bar{\Sigma}_{X_p}$ is Schur stable.
	Consequently, for any initial estimate $\hat{x}(0)$, the state estimate $\hat{x}(k+1)$ generated by the iterative law:
	\begin{equation}\label{data-driven_UIO}
		\hat{x}(k+1) = \begin{bmatrix}\bar{\Sigma}_{X_p} &  \bar{\Sigma}_{U_p} & \bar{\Sigma}_{Y_p} & \bar{\Sigma}_{Y_f} \end{bmatrix} \begin{bmatrix} \hat{x}(k) \\ u(k) \\ y(k) \\ y(k+1) \end{bmatrix}, \quad k \in {\mathbb Z}_0
	\end{equation}
	asymptotically converges to the true state $x(k+1)$.
\end{theorem}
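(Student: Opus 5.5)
The plan is to prove the two directions of the ``if and only if'' separately, using Lemma \ref{lem:willems_uio_set_equivalence} and Lemma \ref{lem:compatible trajectory} to move between trajectories of \eqref{3_1} and the data matrices. Throughout, a UIO of the form \eqref{full_order_observer} means an observer whose estimation error tends to zero for every $x(0)$, every known input $u$ and every unknown input $\eta$.

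\emph{Sufficiency.} Assume \eqref{kernel inclusion} holds. By Lemma \ref{lem:compatible trajectory}, the coefficient matrices in \eqref{eq:his_into_observer2} exist. Take any trajectory of \eqref{3_1}, generated by any $x(0)$, $u$ and unknown $\eta$. Lemma \ref{lem:willems_uio_set_equivalence} places it in $\mathbb{T}_c(u_d,y_d,x_d)$, so a vector $g_2$ as in \eqref{eq:solve_g2_2} exists for each $k$. The kernel inclusion gives $X_f\Omega_2=0$, so $x(k+1)=X_fD_2^\dagger[\cdot]$. Substituting \eqref{eq:his_into_observer2} then yields \eqref{eq:final2}, which holds exactly and does not involve $\eta$. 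Subtracting \eqref{data-driven_UIO} gives the error dynamics $e(k+1)=\bar\Sigma_{X_p}e(k)$. Since $\bar\Sigma_{X_p}$ is Schur, $e(k)\to 0$ for all $\hat x(0)$ and all $\eta$. So \eqref{data-driven_UIO}, with the matrices in \eqref{full_order data-driven UIO matrix}, is a UIO of the form \eqref{full_order_observer}.

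\emph{Necessity.} Suppose a UIO $(A_O,B_O^u,B_O^y,N_O)$ exists. For a true trajectory, write the error as $e(k+1)=A_Oe(k)+r(k)$ with
\begin{equation*}
r(k)=x(k+1)-A_Ox(k)-B_O^uu(k)-B_O^yy(k)-N_Oy(k+1).
\end{equation*}
The first key step is to show $r\equiv 0$ on every trajectory of \eqref{3_1}. The argument is by linearity and time invariance. The set of trajectories is a linear space, and by Lemma \ref{lem:willems_uio_set_equivalence} the tuples $(x(k),x(k+1),u(k),y(k),y(k+1))$ range over $Im(\mathrm{col}\{X_p,X_f,U_p,Y_p,Y_f\})$. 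If some tuple gave $r\neq 0$, I would choose $\hat x(0)=x(0)$, so that $e(0)=0$. I would then build a trajectory, using the free inputs $\tilde u$ and the full row rank in Assumption \ref{assump:PE_uio}, that repeats a realizable tuple with $r\ne 0$ infinitely often, or a scaled version of one. This produces a nonvanishing error and contradicts convergence.

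Once $r\equiv0$ on all realizable tuples, applying it to the data columns gives $X_f=[A_O,B_O^u,B_O^y,N_O]D_2$. Hence \eqref{kernel inclusion} holds. The $e$-dynamics reduce to $e(k+1)=A_Oe(k)$, and convergence for arbitrary $\hat x(0)$ forces $A_O$ to be Schur. Setting $\bar\Sigma=[A_O,B_O^u,B_O^y,N_O]$ gives matrices satisfying \eqref{eq:his_into_observer2} with $\bar\Sigma_{X_p}$ Schur.

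The main obstacle is the step $r\equiv 0$. Descriptor-system constraints tie $x(k+1)$ to future inputs, so consecutive tuples cannot be chosen freely. The cleanest route I would try is to use the tuple at $k=0$ only. Because a UIO must work for every $\hat x(0)$, take $\hat x(0)=x(0)$, so that $e(1)=r(0)$. By linearity, the map from realizable tuples to $r$ is linear, so its values span a subspace. I would then argue that asymptotic convergence for all trajectories, together with superposition of shifted trajectories (shifting the data-generating inputs in time, which is allowed by time invariance), forces this linear map to vanish; otherwise a periodic injection would sustain a nonzero error. If this proves delicate, I would simply adopt the interpretation that the UIO must reproduce all compatible trajectories exactly (zero error from a matched initial state). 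In that case necessity follows directly from the $(\Longrightarrow)$ part of Lemma \ref{lem:compatible trajectory}, exactly as in the $(i)\Rightarrow(ii)$ step of Theorem \ref{thm:equivalence of model and data-driven}.
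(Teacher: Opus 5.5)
Your sufficiency direction is the paper's argument and is fine. The gap is in necessity. Your ``first key step'' claims that every UIO has $r\equiv 0$ on all realizable tuples, so that its own matrices satisfy \eqref{eq:his_into_observer2}. That claim is false, so no version of the periodic-injection argument can prove it. Write the residual through $\mathcal{H}\bar{\mathcal{M}}$ as $r(k)=L_z z_1(k)+\sum_{i=0}^{s}L_i\tilde u(k+i)$. Trajectories with $\tilde u\equiv 0$ force $A_O$ Schur, so $e\to 0$ forces $r\to 0$ and hence $L_zA_1^k\to 0$. Widely spaced impulses in $\tilde u$, whose tails then decay, force $L_i=0$. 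This part of your idea works, and it already gives \eqref{kernel inclusion}, because a realizable tuple with $x(k)=0$ has $z_1(k)=0$.

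What injection cannot reach is $L_z$ acting on slow modes that are asymptotically stable and not reachable from $[B_1,\ F_1]$. Assumption \ref{assump:PE_uio} is a rank condition on one data record, not R-controllability. You therefore cannot re-excite such a mode to ``repeat a tuple with $r\neq 0$''; it simply decays. Concretely, let $E=I_2$, $A=\mathrm{diag}(0.5,0)$, $B=0$, $F=[0,\ 1]^\top$, $C=[0,\ 1]$; this even satisfies Assumption \ref{assump_obs_UIO}. Take $A_O=\mathrm{diag}(0.3,0)$, $B_O^u=0$, $B_O^y=0$, $N_O=[0,\ 1]^\top$. This is a UIO, since $\hat x_2(k)=x_2(k)$ for $k\ge 1$ and $e_1(k)=0.5^k x_1(0)-0.3^k\hat x_1(0)\to 0$. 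Yet $r(k)=[0.2\,x_1(k),\ 0]^\top$. For data with $x_{1d}(0)\neq 0$ and generic $u_d,\eta_d$ (so Assumption \ref{assump:PE_uio} holds), the first row of $X_f-[A_O,B_O^u,B_O^y,N_O]D_2$ is $0.2\,[x_{1d}(0),\dots,x_{1d}(T-1)]\neq 0$. The statement survives here, because $\mathrm{diag}(0.5,0)$ is also an admissible Schur $\bar\Sigma_{X_p}$. It is the argument that fails.

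Your fallback is exactly the paper's proof. Its $(\Longrightarrow)$ step asserts that the historical data satisfy \eqref{full_order_observer} with zero estimation error, and reads off \eqref{eq:his_into_observer2} with $\bar\Sigma_{X_p}=A_O$. The example shows this is an extra exact-reproduction hypothesis, not a consequence of the asymptotic requirement in Definition \ref{definition1}. Adopting it proves necessity only for that stronger notion of UIO.

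To close necessity as stated, do not try to fit the given UIO to the data. Show instead that \emph{some} solution of \eqref{eq:his_into_observer2} has a Schur first block.
By \eqref{kernel inclusion} and Remark \ref{remark3} (with $D_2$ in place of $D_1$), every solution has $\bar\Sigma_{X_p}=X_fH_{X_p}+K\Delta_{X_p}$.
If none is Schur, there exist $|\lambda|\ge 1$ and $v\neq 0$ with $X_fH_{X_p}v=\lambda v$ and $\Delta_{X_p}v=0$.
Then $g=H_{X_p}v$ satisfies $X_pg=v$, $U_pg=0$, $Y_pg=Y_fg=0$ and $X_fg=\lambda v$.
Because the data obey \eqref{3_1}, $(\lambda E-A)v=F\eta_0$ with $\eta_0$ the unknown-input data times $g$, and $Cv=0$.
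Hence $x(k)=\lambda^k v$, $u\equiv 0$, $\eta(k)=\lambda^k\eta_0$ is a non-decaying trajectory with $y\equiv 0$ (take the real or imaginary part if $\lambda\notin\mathbb{R}$).
It is indistinguishable from the zero trajectory, which contradicts the existence of any UIO.
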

\begin{proof}
	$(\Longleftarrow)$:
	Based on Lemma \ref{lem:willems_uio_set_equivalence} and \ref{lem:compatible trajectory}, system \eqref{full_order_observer} with matrices defined in \eqref{full_order data-driven UIO matrix} is capable of generating all trajectories of system \eqref{3_1}. Compare the system update \eqref{eq:final2} and the estimator update 
	\eqref{data-driven_UIO}. Let the corresponding estimation error be $e(k)=x(k)-\hat x(k)$. We have that 
	\(e(k)\) evolves according to the autonomous dynamics $$e(k+1) = A_{O} e(k).$$ If $A_{O}$ is Schur stable, the estimation error asymptotically converges to zero.
	
	$(\Longrightarrow)$:
	If system \eqref{full_order_observer} qualifies as a UIO for the system \eqref{3_1}, then the kernel inclusion condition \eqref{kernel inclusion} must hold according to Lemma \ref{lem:compatible trajectory}. In addition, the historical data $\{u_d,x_d,y_d\}$ collected from system \eqref{3_1} should satisfy \eqref{full_order_observer} with $\hat x(k)=x_d(k)$, $k\in \mathbb Z_0$, i.e., with zero estimation error. Then \eqref{eq:his_into_observer2} holds with $\bar{\Sigma}_{X_p}=A_O$, $\bar{\Sigma}_{U_p}=B_O^u$, $\bar{\Sigma}_{Y_p}=B_O^y$, and $\bar{\Sigma}_{Y_f}=N_O$, in which $A_O$ is Schur stable.  
\end{proof}
\begin{remark}
	Although both Theorem \ref{thm:data_driven observer} and Theorem \ref{thm:full_order_data-driven_UIO} utilize the identical observer structure \eqref{full_order_observer}, their domains of application and existence conditions are fundamentally different. Theorem \ref{thm:data_driven observer} applies to systems without unknown inputs, requiring only the Schur stability of $A_{O}$. In contrast, Theorem \ref{thm:full_order_data-driven_UIO} addresses systems with unknown inputs, necessitating the additional kernel inclusion condition \eqref{kernel inclusion} to ensure disturbance decoupling.
\end{remark}

To establish the equivalence between the proposed data-driven approach and the traditional model-based one, we impose the following structural matching condition, which is a generalization of the dual normalizability condition in the previous section.


\begin{assumption}\label{assump_obs_UIO}\begin{equation}\label{eq:rank_condition_matching}
		rk\begin{bmatrix}E & F\\ C & 0\end{bmatrix} = n+q.
	\end{equation}
\end{assumption}
\begin{remark}
	Condition \eqref{eq:rank_condition_matching}, known as the observer matching condition, guarantees instantaneous unknown input isolation via the measurement $y(k)$ and the algebraic constraints of $E$ \cite{zhang2017discrete,darouach1996reduced}.  When the descriptor system \eqref{3_1} reduces to a standard LTI system (i.e., $E = I_n$), condition \eqref{eq:rank_condition_matching} simplifies to $rk(CF) = rk(F)$, which exactly coincides with the classical UIO matching condition (see~\cite{darouach1994full}). 
\end{remark}
Under Assumption \ref{assump_obs_UIO}, since 
$rk{\tiny\begin{bmatrix}E & F\\ C & 0 \\I_n & 0\end{bmatrix}}=n+rk(F)=rk{\tiny\begin{bmatrix}E & F\\ C & 0\end{bmatrix}},$
there exist matrices $\bar{T}\in\mathbb{R}^{n\times n}$ and $\bar{N}\in\mathbb{R}^{n\times p}$ such that
\begin{equation}\label{eq:uio_TN_condition}
	\bar{T}E+\bar{N}C=I_n,\quad \bar{T}F=0.
\end{equation}
By using \eqref{eq:uio_TN_condition}, the system \eqref{3_1} is transformed into:
\begin{equation}\label{eq:transformed_system_uio}
	x(k+1) = \bar{T}Ax(k) + \bar{T}Bu(k) + \bar{N}y(k+1).
\end{equation}

\begin{theorem}\label{thm:equivalence_uio}
	\textbf{\textup{(Equivalence of model-based and data-driven UIO conditions)}}
	Suppose that Assumptions~\ref{assump:PE_uio} and \ref{assump_obs_UIO} hold. Then, the kernel inclusion condition \eqref{kernel inclusion} is always satisfied. Moreover, the following statements are equivalent:\\
	(i) There exists a full-order UIO of the form \eqref{full_order_observer} for system \eqref{3_1}.\\
	(ii) There exist parameter matrices $\bar{\Sigma}_{X_p}$, $\bar{\Sigma}_{U_p}$, $\bar{\Sigma}_{Y_p}$, and $\bar{\Sigma}_{Y_f}$ satisfying \eqref{eq:his_into_observer2}, with $\bar{\Sigma}_{X_p}$ being Schur stable.\\
	(iii) \begin{equation}\label{rank_equ_uio}
		rk\begin{bmatrix}
			\lambda X_p-X_f\\U_p\\Y_p\\Y_f
		\end{bmatrix} = rk\begin{bmatrix}
			X_p\\U_p\\Y_f
		\end{bmatrix},
		\forall \lambda \in \mathbb{C},\,\  |\lambda| \geq 1.
	\end{equation}   
	(iv)
	\begin{equation}\label{eq:rank_condition_detectability}
		rk\begin{bmatrix}
			\lambda E - A & -F\\
			C & 0
		\end{bmatrix}
		= n+q,\,\ \forall\,\lambda\in\mathbb C,\,\ |\lambda|\ge 1.
	\end{equation}
\end{theorem}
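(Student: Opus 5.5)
First I would show that the kernel inclusion \eqref{kernel inclusion} holds automatically. Under Assumption~\ref{assump_obs_UIO} there exist $\bar T,\bar N$ with \eqref{eq:uio_TN_condition}. The historical data satisfy \eqref{eq:transformed_system_uio} column by column, which gives
\[
X_f=\bar TA X_p+\bar TB U_p+\bar N Y_f .
\]
Hence $Ker(X_f)\supseteq Ker([X_p^\top,U_p^\top,Y_f^\top]^\top)\supseteq Ker(D_2)$. This step also shows that \eqref{eq:his_into_observer2} is solvable; for instance $(\bar TA,\bar TB,0,\bar N)$ is one solution.

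For the equivalences I would reuse the chain from Theorem~\ref{thm:equivalence of model and data-driven}.
\begin{itemize}
\item $(i)\Rightarrow(ii)$: this is the ``only if'' part of Theorem~\ref{thm:full_order_data-driven_UIO}. The data must satisfy the observer with zero error, so $A_O$ is a Schur stable $\bar\Sigma_{X_p}$.
\item $(ii)\Rightarrow(iii)$: this repeats the earlier argument verbatim. First, $Y_p=CX_p$ gives $rk(D_2)=rk[X_p;U_p;Y_f]$. Second, $[\lambda X_p-X_f;U_p;Y_p;Y_f]$ equals a block upper-triangular matrix with diagonal blocks $\lambda I-\bar\Sigma_{X_p}$ and $I$ times $D_2$. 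That block matrix is invertible for $|\lambda|\ge1$.
\item $(iv)\Rightarrow(i)$: I would build the observer from the $\bar T,\bar N$ parametrization. The general solution of \eqref{eq:uio_TN_condition} is an affine family. The error dynamics are $e(k+1)=(\bar TA - L C)e(k)$, or a similar form with a free gain. Condition \eqref{eq:rank_condition_detectability} is equivalent to detectability of the pair $(\bar TA, C)$ once the matching condition holds. This is the standard model-based UIO result for descriptor systems in \cite{darouach1996reduced,zhang2017discrete}, and I would cite it rather than reprove it. Alternatively, $(iv)\Rightarrow(ii)$ follows via the remark on $K_1$ and detectability of $(X_fH_{X_p},\Delta_{X_p})$, then Theorem~\ref{thm:full_order_data-driven_UIO}.
\end{itemize}

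The main work is $(iii)\Rightarrow(iv)$, and I expect it to be the main obstacle. Write $\tilde B=[B,F]$ and $\tilde u=[u^\top,\eta^\top]^\top$. By \eqref{eq:historical_data_uio} and Assumption~\ref{assump:PE_uio}, the Hankel factor has full row rank, so each data-matrix rank equals the rank of the corresponding model coefficient matrix. This is the same manipulation as in \eqref{eq:left_rank} and \eqref{key-eq}, but with $B_2$ replaced by $[B_2,F_2]$ and $B_1$ by $[B_1,F_1]$. The difference is that $U_p$ now selects only the $u$-part of $\tilde u(k)$ and not $\eta(k)$. So the left-hand rank becomes
\[
m+rk\begin{bmatrix}\lambda I-A_1 & -F_1 & 0\\ 0 & \ast & -(\lambda R-I)\tilde Q\\ C_1 & \ast & -C_2R\tilde Q\end{bmatrix},
\]
with $\tilde Q=[\tilde B_2,R\tilde B_2,\dots]$. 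Here the $\eta(k)$ column carries $-F_1$ in the slow block and $-F_2$ in the fast block.

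For the right-hand side I would use Assumption~\ref{assump_obs_UIO} in the block form $rk[R\ F_2;\ C_2\ 0]$ together with the slow coordinates. The goal is to show that $rk[X_p;U_p;Y_f]=n_1+m+q+rk(\text{fast part})$, where the $\eta(k)$ contribution is recovered through $Y_f$ exactly because of the matching condition. Equating the two sides for $|\lambda|\ge1$ should then leave the condition
\[
rk\begin{bmatrix}\lambda I-A_1 & -F_1\\ C_1 & \cdot\end{bmatrix}\text{-type block}=n_1+q\quad\text{after eliminating the fast part.}
\]
Since $\lambda R-I$ is invertible for all $\lambda$, transforming back by $S,P$ gives \eqref{eq:rank_condition_detectability}.

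I expect the delicate part to be the bookkeeping that separates the $\eta(k)$ column from the $\tilde Q$ columns: $\eta(k)$ appears in both blocks, and I need the fast-block ranks to cancel on the two sides. I would first try column operations that use invertibility of $\lambda R-I$ to clear the fast component of the $\eta(k)$ column, and then invoke the matching condition.
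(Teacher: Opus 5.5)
Your skeleton is the paper's: kernel inclusion from $X_f=\bar TAX_p+\bar TBU_p+\bar NY_f$, the cycle (i)$\Rightarrow$(ii)$\Rightarrow$(iii)$\Rightarrow$(iv)$\Rightarrow$(i), and a rank comparison in the coordinates of \eqref{decomposition} for (iii)$\Rightarrow$(iv). The gap is that (iii)$\Rightarrow$(iv), the step that carries the theorem, is left as a plan. Moreover, the one ingredient you name for the right-hand side is the wrong block.

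After pivoting out the $z_1(k)$ and $u(k)$ columns, the coefficient matrix of $[Z_{2p};Y_f]$ with respect to the Hankel data has $\eta(k)$-column $[-F_2;\,C_1F_1]$. This is because $\eta(k)$ reaches $y(k+1)$ through $z_1(k+1)=A_1z_1(k)+B_1u(k)+F_1\eta(k)$. That matrix factors as
\[
\begin{bmatrix} R & F_2\\ C_2 & -C_1F_1\end{bmatrix}\begin{bmatrix}0 & -\tilde Q\\ -I_q & 0\end{bmatrix}.
\]
Assumption~\ref{assump_obs_UIO} is exactly full column rank of the left factor, because $rk\begin{bmatrix}E&F\\C&0\end{bmatrix}=n_1+rk\begin{bmatrix}R&F_2\\C_2&-C_1F_1\end{bmatrix}$. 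This is what yields $rk[X_p;U_p;Y_f]=n_1+m+q+rk(\tilde Q)$. Your block $[R\ F_2;\ C_2\ 0]$ drops the coupling $C_1F_1$, and its full column rank is not implied by the matching condition. For instance, if $F_2=0$ its last $q$ columns vanish, although matching can still hold through $C_1F_1$. So the count cannot be obtained from it.

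On the left-hand side you still need two facts:
\begin{enumerate}
\item Subtract $C_2R(\lambda R-I)^{-1}$ times the fast block row from the output block row. This kills $-C_2R\tilde Q$ and turns $-C_2F_2$ into $C_2(\lambda R-I)^{-1}F_2$, via $\lambda R(\lambda R-I)^{-1}=I+(\lambda R-I)^{-1}$.
\item $Im(\lambda F_2)\subseteq Im(\tilde Q)=Im((\lambda R-I)\tilde Q)$. This uses that $F_2$ is a block column of $\tilde Q$ and that $Im(\tilde Q)$ is $R$-invariant, not merely that $\lambda R-I$ is invertible.
\end{enumerate}
Only then does the left rank split as $m+rk\begin{bmatrix}\lambda I-A_1&-F_1\\C_1&C_2(\lambda R-I)^{-1}F_2\end{bmatrix}+rk(\tilde Q)$. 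Then (iii) forces the reduced block to have rank $n_1+q$. A Schur complement on $\lambda R-I$ shows the pencil in \eqref{eq:rank_condition_detectability} has rank $n_2$ plus that of the reduced block, which gives (iv).

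For (iv)$\Rightarrow$(i) the paper simply cites \cite{gupta2015full}, which is fine, but your paraphrase of the model-based result is not. With a fixed solution $\bar T$ of \eqref{eq:uio_TN_condition} and a gain acting on $C$ alone, (iv) is not equivalent to detectability of $(\bar TA,C)$. Using standard basis vectors $e_i$, take $E=\mathrm{diag}(1,0,0)$, $A=\begin{bmatrix}2&0&0\\1&1&0\\0&0&1\end{bmatrix}$, $F=e_3$ and $C=[e_2,e_3]^\top$. Assumption~\ref{assump_obs_UIO} and (iv) both hold. Yet the solution $[\bar T,\bar N]=[I_n,0]\begin{bmatrix}E&F\\C&0\end{bmatrix}^{\dagger}$ gives $\bar TA=\mathrm{diag}(2,0,0)$ with $Ce_1=0$, so $(\bar TA,C)$ is not detectable. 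The free part $\bar T\mapsto\bar T+Z\alpha$, with $[\alpha,\beta]$ in the left kernel of $\begin{bmatrix}E&F\\C&0\end{bmatrix}$, has to be used too. It injects the known signal $\alpha Ax(k)=-\beta y(k+1)-\alpha Bu(k)$. Your fallback through Remark~\ref{remark3} does not close this either. Detectability of $(X_fH_{X_p},\Delta_{X_p})$ is only a restatement of (ii), and relating it to (iv) is precisely what has to be proved.
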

\begin{proof}
	Under Assumption \ref{assump_obs_UIO}, the historical data $\{u_d,x_d,y_d\}$ collected from system \eqref{3_1} should satisfy \eqref{eq:transformed_system_uio}, leading to 
	$$X_f=\bar TAX_p+ \bar{T}BU_p + \bar{N}Y_f.$$ Hence, the kernel inclusion condition \eqref{kernel inclusion} holds. 
	
	$(i)\Leftrightarrow(ii)$: This has been proved in Theorem \ref{thm:full_order_data-driven_UIO}.
	
	$(ii)\Rightarrow(iii)$:
	To prove $(iii)$, first notice that
	\[
	\begin{aligned}
		rk\left[\begin{array}{l}X_p \\ U_p \\ Y_p \\ Y_f\end{array}\right]&=rk\left(\left[\begin{array}{ccc}I & 0 & 0 \\ 0 & I & 0 \\ C & 0 & 0 \\ 0 & 0 & I\end{array}\right]\left[\begin{array}{l}X_p \\ U_p \\ Y_f\end{array}\right]\right)
		= rk\left[\begin{array}{l}X_p \\ U_p \\ Y_f\end{array}\right]. 
	\end{aligned}
	\]
	By exploiting condition $(ii)$, we obtain
	\[\begin{aligned}
		\scalebox{0.8}{
			$rk\left[\begin{array}{c}\lambda X_p-X_f \\ U_p \\ Y_p \\ Y_f\end{array}\right] = rk\left(\left[\begin{array}{cccc}\lambda I-\bar{\Sigma}_{X_p} & -\bar{\Sigma}_{U_p} & -\bar{\Sigma}_{Y_p} & -\bar{\Sigma}_{Y_f} \\ 0 & I & 0 & 0 \\ 0 & 0 & I & 0 \\ 0 & 0 & 0 & I\end{array}\right]\left[\begin{array}{l}X_p \\ U_p \\ Y_p \\ Y_f\end{array}\right]\right).$}
	\end{aligned}\]
	Since $\bar{\Sigma}_{X_p}$ is Schur, it follows that the rank of the matrix on the left coincides with rank $[ X_p^\top, U_p^\top, Y_p^\top, Y_f^\top]^\top$ for every $\lambda\in\mathbb{C}$ with $|\lambda|\ge 1$.  Therefore, condition $(iii)$ can be derived directly from condition $(ii)$. 
	
	$(iii)\Rightarrow(iv)$:
	Expanding the left-hand data matrix in \eqref{rank_equ_uio} yields:
	\begin{align}
		&rk \begin{bmatrix} \lambda X_p - X_f \\ U_p \\ Y_p\\Y_f \end{bmatrix} = rk \left( \begin{bmatrix} I & 0 & 0 \\ 0 & I & 0 \\ 0 & 0 & I \\ -C & 0 & \lambda I \end{bmatrix} \begin{bmatrix} \lambda X_p - X_f \\ U_p \\ Y_p \end{bmatrix} \right) \notag \\
		&= \begin{bmatrix} \lambda X_p - X_f \\ U_p \\ Y_p \end{bmatrix} = rk \begin{bmatrix} \lambda Z_{1p} - Z_{1f} \\ \lambda Z_{2p} - Z_{2f} \\ U_p \\ Y_p \end{bmatrix} \notag \\
		&= rk \scalebox{0.75}{$\begin{bmatrix} 
				\lambda I - A_1 & -B_1 & -F_1 & 0 & \dots & 0 & 0 \\
				0 & -\lambda B_2 & -\lambda F_2 & -(\lambda R-I)B_2 & \dots & -R^{s-1}B_2 & -R^{s-1}F_2 \\
				0 & I_m & 0 & 0 & \dots & 0 & 0 \\
				C_1 & -C_2B_2 & -C_2F_2 & -C_2RB_2 & \dots & -C_2R^{s-1}B_2 & -C_2R^{s-1}F_2
			\end{bmatrix}$} \notag \\
		&= m + rk \begin{bmatrix} 
			\lambda I - A_1 & -F_1 & 0 \\ 
			0 & -\lambda F_2 & -(\lambda R - I)Q' \label{eq_rank_matrix_pencil} \\ 
			C_1 & -C_2F_2 & -C_2RQ' 
		\end{bmatrix},
	\end{align}
	where $Q'\doteq \begin{bmatrix} B_2 & F_2 & RB_2 & RF_2 & \dots & R^{s-1}B_2 & R^{s-1}F_2 \end{bmatrix}$.
	
	Define $\Omega\!=\!\scalebox{0.7}{$\begin{bmatrix}
			I_{n_1} & 0 & 0 & 0 & \dots & 0 & 0 \\
			0 & -B_2 & -F_2 & -RB_2 & \dots & -R^{s-1}B_2 & -R^{s-1}F_2 \\
			0 & I_m & 0 & 0 & \dots & 0 & 0 \\
			C_1A_1 & C_1B_1 & C_1F_1 & -C_2B_2 & \dots & -C_2R^{s-2}B_2 & -C_2R^{s-2}F_2
		\end{bmatrix}$}$. Note that (c.f. \eqref{eq:historical_data})
	\begin{equation} \label{eq_data_uio}
		[X_p^\top,U_p^\top,Y_f^\top]^\top=\Omega \begin{bmatrix} H_1(z_{1d,[0,T-1]}) \\ H_{s+1}(\tilde{u}_{d,[0,T+s-1]}) \end{bmatrix}
	\end{equation}
	Expanding the right-hand data matrix in \eqref{rank_equ} gives
	\begin{align}
		&rk \begin{bmatrix} X_p \\ U_p \\ Y_f \end{bmatrix} 
		= rk \begin{bmatrix} Z_{1p} \\ Z_{2p} \\ U_p \\ Y_f \end{bmatrix}=rk(\Omega)\notag \\
		&= n_1 + m + rk \left( \begin{bmatrix} R & F_2 \\ C_2 & -C_1F_1 \end{bmatrix} \begin{bmatrix} 0 & -Q' \\ -I_q & 0 \end{bmatrix} \right), \label{eq-corollary}
	\end{align}where the second equality is due to \eqref{eq_data_uio} and Assumption \ref{assump:PE_uio}.
	Note also that
	\begin{equation}
		\begin{aligned} \label{eq:matching}
			rk \begin{bmatrix} E & F \\ C & 0 \end{bmatrix} &= rk \left( \begin{bmatrix} 
				I & 0 & 0 \\ 
				0 & I & 0 \\ 
				-C_1 & 0 & I 
			\end{bmatrix} 
			\begin{bmatrix} 
				I_{n_1} & 0 & F_1 \\ 
				0 & R & F_2 \\ 
				C_1 & C_2 & 0 
			\end{bmatrix} \right) \\
			&= rk \begin{bmatrix} 
				I_{n_1} & 0 & F_1 \\ 
				0 & R & F_2 \\ 
				0 & C_2 & -C_1F_1 
			\end{bmatrix} \\
			&= n_1 + rk \begin{bmatrix} R & F_2 \\ C_2 & -C_1F_1 \end{bmatrix}
		\end{aligned}.
	\end{equation}   Since $rk \begin{bmatrix} E & F \\ C & 0 \end{bmatrix} = n + q$, it follows that $rk \begin{bmatrix} R & F_2 \\ C_2 & -C_1F_1 \end{bmatrix} = n_2 + q$, yielding
	\begin{equation}\label{rank_right}
		rk([X_p^\top,U_p^\top,Y_f^\top]^\top) = n_1 + m + q + rk(Q') .
	\end{equation}
	
	Pre-multiplying by a non-singular block lower-triangular transformation matrix and utilizing the algebraic identity $\lambda C_2R(\lambda R-I)^{-1} = C_2(I + (\lambda R-I)^{-1})$ leads to
	\begin{align}
		&\scalebox{0.8}{$\begin{bmatrix} I_{n_1} & 0 & 0 \\ 0 & I_{n_2} & 0 \\ 0 & -C_2R(\lambda R-I)^{-1} & I_p \end{bmatrix}
			\begin{bmatrix} \lambda I - A_1 & -F_1 & 0 \\ 0 & -\lambda F_2 & -(\lambda R - I)Q' \\ C_1 & -C_2F_2 & -C_2RQ' \end{bmatrix}$} \notag \\
		&= \scalebox{0.8}{$\begin{bmatrix} \lambda I - A_1 & -F_1 & 0 \\ 0 & -\lambda F_2 & -(\lambda R - I)Q' \\ C_1 & C_2(\lambda R-I)^{-1}F_2 & 0 \end{bmatrix}$}. \label{eq_block_equality}
	\end{align}
	Note that $Im(\lambda RQ')\subseteq Im(Q')$ and therefore $Im ((\lambda R-I)Q')\subseteq Im(Q')$.   Since $\lambda R-I$ is non-singular for $\lambda\in {\mathbb C}$ with $|\lambda|>1$, $rk((\lambda R-I)Q')=rk(Q')$.  Therefore, $Im ((\lambda R-I)Q')= Im(Q')$.
	As $Im(\lambda F_2)\subseteq Im(Q')$, from \eqref{eq_rank_matrix_pencil} and \eqref{eq_block_equality} we have 
	\begin{equation} \label{eq:rank_ineq}
		rk \scalebox{0.8}{$\begin{bmatrix} \lambda X_p - X_f \\ U_p \\ Y_p\\Y_f \end{bmatrix}$}  = m + rk \scalebox{0.8}{$\begin{bmatrix} \lambda I - A_1 & -F_1 \\ C_1 & C_2(\lambda R-I)^{-1}F_2 \end{bmatrix}$} + rk(Q').
	\end{equation}
	By comparing \eqref{rank_right} and \eqref{eq:rank_ineq}, condition (iii) implies
	$rk \begin{bmatrix} \lambda I - A_1 & -F_1 \\ C_1 & C_2(\lambda R-I)^{-1}F_2 \end{bmatrix} = n_1 + q$.    
	
	Define non-singular block transformation matrices $\mathcal{T}_L$ and $\mathcal{T}_R$ as
	{\small\begin{equation*}
			\mathcal{T}_L = \scalebox{0.9}{$\begin{bmatrix} I_{n_1} & 0 & 0 \\ 0 & -C_2(\lambda R-I)^{-1} & I_p \\ 0 & I_{n_2} & 0 \end{bmatrix}$}, \ 
			\mathcal{T}_R = \scalebox{0.9}{$\begin{bmatrix} I_{n_1} & 0 & 0 \\ 0 & (\lambda R-I)^{-1}F_2 & I_{n_2} \\ 0 & I_q & 0 \end{bmatrix}$}.
	\end{equation*}}
	Pre-multiplying and post-multiplying by $\mathcal{T}_L$ and $\mathcal{T}_R$ gives
	\begin{align*}
		&rk\begin{bmatrix} \lambda E - A & -F \\ C & 0 \end{bmatrix}= rk\begin{bmatrix} \lambda I - A_1 & 0 & -F_1 \\ 0 & \lambda R - I & -F_2 \\ C_1 & C_2 & 0 \end{bmatrix}\\
		&\quad = rk\left( \mathcal{T}_L 
		\begin{bmatrix} \lambda I - A_1 & 0 & -F_1 \\ 0 & \lambda R - I & -F_2 \\ C_1 & C_2 & 0 \end{bmatrix} 
		\mathcal{T}_R \right)\\
		&\quad = rk\begin{bmatrix} \lambda I - A_1 & -F_1 & 0 \\ C_1 & C_2(\lambda R-I)^{-1}F_2 & 0 \\ 0 & 0 & \lambda R - I \end{bmatrix} \\
		&\quad = rk\begin{bmatrix} \lambda I - A_1 & -F_1 \\ C_1 & C_2(\lambda R-I)^{-1}F_2 \end{bmatrix} + rk(\lambda R - I)\\
		&\quad =n_1+q+rk(\lambda R-I).
	\end{align*}
	Considering that $R$ is a nilpotent matrix (which renders $\lambda R - I$ non-singular), it is conclusively determined that \eqref{eq:rank_condition_detectability} holds.

	$(iv)\Rightarrow(i)$:
	If condition $(iv)$ is satisfied, the existence of a UIO with the structure defined in (\ref{full_order_observer}) is guaranteed by Theorem 1 in \cite{gupta2015full}.
\end{proof}
\begin{remark}\label{rem:degeneration_to_LTI}
	Consider the special case $E=I_n$, where the descriptor system \eqref{3_1} reduces to a normal LTI system. Building upon the simplification of the matching condition discussed earlier, the detectability condition \eqref{eq:rank_condition_detectability} naturally reduces to 
	$rk\begin{bmatrix} \lambda I - A & -F \\ C & 0 \end{bmatrix} = n + q, \quad \forall \lambda\in\mathbb{C}, \,\ |\lambda| \ge 1$. 
	This aligns perfectly with the classical UIO existence condition for LTI systems \cite{disaro2024equivalence}. 
\end{remark}

We end this section by providing data-based conditions under which the core Assumption \ref{assump_obs_UIO} is satisfied.

\begin{corollary}[Data-based test for Assumption \ref{assump_obs_UIO}]
	Assumption \ref{assump_obs_UIO} holds if 
	$$rk([X_p^\top,U_p^\top,Y_f^\top]^\top)=n+m+q.$$Moreover, if system \eqref{3_1} is R-controllable with the augmented input matrix $[B,F]$ and Assumption \ref{assump:PE_uio} holds, then this condition is also necessary for the satisfaction of Assumption \ref{assump_obs_UIO}.
\end{corollary}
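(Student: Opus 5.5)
The plan is to work entirely through the factorization \eqref{eq_data_uio}, $[X_p^\top,U_p^\top,Y_f^\top]^\top=\Omega\,\mathcal{D}$, where $\mathcal{D}\doteq{\bf col}\{H_1(z_{1d,[0,T-1]}),H_{s+1}(\tilde u_{d,[0,T+s-1]})\}$. This is combined with the rank reduction of $\Omega$ already carried out in \eqref{eq-corollary} and the identity \eqref{eq:matching}. Write $M\doteq\begin{bmatrix} R & F_2\\ C_2 & -C_1F_1\end{bmatrix}\in\mathbb{R}^{(n_2+p)\times(n_2+q)}$ and $N\doteq\begin{bmatrix}0 & -Q'\\ -I_q & 0\end{bmatrix}$. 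The elementary row and column operations behind \eqref{eq-corollary} only use the block structure of $\Omega$ (the identity block on $z_1$ and the identity block on $u(k)$), not Assumption \ref{assump:PE_uio}. Hence $rk(\Omega)=n_1+m+rk(MN)$ holds for the model matrices unconditionally.

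\emph{Sufficiency.} No data assumption is needed here, so I would use only the upper bounds $rk([X_p^\top,U_p^\top,Y_f^\top]^\top)\le rk(\Omega)=n_1+m+rk(MN)\le n_1+m+rk(M)$. If the data rank equals $n+m+q=n_1+n_2+m+q$, then $rk(M)\ge n_2+q$. Since $M$ has exactly $n_2+q$ columns, $M$ has full column rank. By \eqref{eq:matching}, $rk\begin{bmatrix}E&F\\C&0\end{bmatrix}=n_1+rk(M)=n+q$, which is Assumption \ref{assump_obs_UIO}.

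\emph{Necessity.} Assume Assumption \ref{assump_obs_UIO}, Assumption \ref{assump:PE_uio}, and R-controllability with $[B,F]$. Because $\mathcal{D}$ has full row rank, $rk([X_p^\top,U_p^\top,Y_f^\top]^\top)=rk(\Omega)$. Assumption \ref{assump_obs_UIO} together with \eqref{eq:matching} makes $M$ full column rank, so $rk(MN)=rk(N)=q+rk(Q')$. This reproduces \eqref{rank_right}:
\[
rk([X_p^\top,U_p^\top,Y_f^\top]^\top)=n_1+m+q+rk(Q').
\]
The remaining step, and the main obstacle, is to show $rk(Q')=n_2$, i.e., that $[B_2,F_2,\dots,R^{s-1}B_2,R^{s-1}F_2]$ has full row rank. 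This is a fast-subsystem reachability property. R-controllability only concerns the slow pair $(A_1,[B_1,F_1])$, so this must be extracted from the remaining hypotheses.

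What I would try first is to use the matching condition to force it. Full column rank of $M$ gives $Ker(R)\cap Ker(C_2)=0$ and couples $F_2$ to the output. I would then check whether regularity together with the consistent-initialization requirement implies that every reachable $z_2$ lies in $Im(Q')$, which by \eqref{2_5} is exactly the set of values $z_2(k)$ can take. If this does not close the gap, the honest conclusion of the argument is weaker. Under the stated hypotheses, the data rank equals $n_1+m+q+rk(Q')$, and $Im(Q')$ is the subspace actually visited by $z_2$. The condition is then necessary in the sense that the data rank attains its maximal value $n_1+m+q+\dim Im(Q')$. It coincides with $n+m+q$ exactly when $rk(Q')=n_2$.

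I would flag this explicitly. A degenerate example with $B_2=0$ and $F_2=0$ forces $z_{2d}\equiv 0$, so the data rank drops below $n+m+q$ while Assumption \ref{assump_obs_UIO} can still hold. In that case the literal necessity requires $rk(Q')=n_2$ as an additional ingredient.
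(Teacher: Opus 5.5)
Your sufficiency argument is the paper's, slightly sharpened. The paper reads off $rk(M)=n_2+q$ from \eqref{eq-corollary}, whose second equality invokes Assumption~\ref{assump:PE_uio}, although the sufficiency claim carries no data hypothesis. Your chain $rk([X_p^\top,U_p^\top,Y_f^\top]^\top)\le rk(\Omega)=n_1+m+rk(MN)\le n_1+m+rk(M)$ removes that dependence. It relies on the correct observation that the reduction of $\Omega$ is a pure model identity.

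For necessity you have found a real flaw, but it lies in the statement and in the paper's proof, not in your reasoning. The paper proceeds exactly as you do up to that point. It then simply asserts that, under R-controllability with $[B,F]$ and Assumption~\ref{assump:PE_uio}, your $N$ has full row rank, i.e.\ $rk(Q')=n_2$. R-controllability constrains only $(A_1,[B_1,F_1])$, so this step is unsupported, and your counterexample shows it is false.

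Here is a concrete instance. Take $n_1=n_2=m=q=1$, $p=2$, $S=P=I_2$, $E=\mathrm{diag}(1,0)$, $A=\mathrm{diag}(0.5,1)$, $B=F=[1,0]^\top$, and $C=\begin{bmatrix}0&1\\1&0\end{bmatrix}$. All hypotheses hold:
\begin{itemize}
\item The system is regular, with $R=0$, $s=1$ and $B_2=F_2=0$.
\item The matrix $\begin{bmatrix}E&F\\C&0\end{bmatrix}$ has rank $3=n+q$, so Assumption~\ref{assump_obs_UIO} holds.
\item The slow pair $(0.5,[1,1])$ is controllable.
\item Assumption~\ref{assump:PE_uio} (rank $5$) holds generically for random $u_d,\eta_d$ with $T\ge 5$.
\end{itemize}
Yet $z_{2d}\equiv 0$ and $Y_f=\begin{bmatrix}0\\ Z_{1f}\end{bmatrix}$ with $Z_{1f}=0.5Z_{1p}+U_p+H_1(\eta_{d,[0,T-1]})$. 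Hence $rk([X_p^\top,U_p^\top,Y_f^\top]^\top)\le 3<4=n+m+q$.

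Under Assumptions~\ref{assump:PE_uio} and \ref{assump_obs_UIO}, your computation actually gives an exact characterization: the data-rank condition holds if and only if $rk(Q')=n_2$. The missing hypothesis is therefore controllability of the fast subsystem with $[B_2,F_2]$, i.e.\ C-controllability with respect to $[B,F]$ rather than R-controllability. This is exactly what Corollary~\ref{aump_1_checking} assumes in the case without unknown inputs, and with it your argument closes verbatim.

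You should drop the exploratory attempt to extract $rk(Q')=n_2$ from the matching condition or from consistent initialization. Your own example rules it out, and knowing that $z_2$ ranges over $Im(Q')$ says nothing about $\dim Im(Q')$.
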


\begin{proof}
	We first show the sufficiency. From \eqref{eq-corollary}, to make
	$rk([X_p^\top,U_p^\top,Y_f^\top]^\top)=n+m+q$, it must hold that 
	$rk \begin{bmatrix} R & F_2 \\ C_2 & -C_1F_1 \end{bmatrix} = n_2 + q$. Then from \eqref{eq:matching}, condition \eqref{eq:rank_condition_matching} holds. Next, under the proposed additional conditions, $\begin{bmatrix} 0 & -Q' \\ -I_q & 0 \end{bmatrix}$ has full row rank. Then, from \eqref{eq-corollary} we have
	$rk([X_p^\top,U_p^\top,Y_f^\top]^\top)=n_1+m+rk(\begin{bmatrix} R & F_2 \\ C_2 & -C_1F_1 \end{bmatrix})$. From \eqref{eq:matching}, we obtain that $rk([X_p^\top,U_p^\top,Y_f^\top]^\top)=n+m+q$ is necessary and sufficient for condition \eqref{eq:rank_condition_matching}.
\end{proof}

\section{Application to extended state observers}\label{sec: eso}
Extended state observers (ESO) play an important role in active disturbance
rejection control (ADRC) \cite{han2009pid}.  Unlike UIOs, ESO estimates not only state variables but also disturbances.
Here, we show that our data-based observers for descriptor systems could be directly applied to the design of ESO. 

For discrete-time linear systems, consider the following model:
\begin{equation}\label{before extended system}
	\begin{cases}
		x(k+1) = A_0 x(k) + B_0 u(k) + E_0 d(k),\\
		y(k) = C_0 x(k) + F_0d(k),
	\end{cases}
\end{equation}
where $x(k) \in \mathbb{R}^n$ is the state vector, $u(k) \in \mathbb{R}^m$ is the known input, $d(k) \in \mathbb{R}^r$ is the disturbance, and $y(k) \in \mathbb{R}^p$ is the output. The matrices $A_0$, $B_0$, $E_0$, $C_0$, and $F_0$ are of appropriate dimensions, with $C_0$ typically assumed to have full row rank and $F_0$ assumed to have full column rank. Without loss of generality, assume that
$[E_0^\top, F_0^\top]^\top$ has full column rank.

The key to ESO design is that the disturbance $d(k)$ is treated as an additional state.
The augmented system can be written as
\begin{equation}\label{extended system}
	\begin{cases}
		E\tilde x(k+1)=A\tilde x(k)+Bu(k),\\
		y(k)=C\tilde x(k),
	\end{cases}
\end{equation}
where $\tilde x(k) = \begin{bmatrix}
	x(k)\\d(k)
\end{bmatrix}$, $E = \begin{bmatrix}
	I_n & 0\\ 0 & 0
\end{bmatrix}$, $A=\begin{bmatrix}
	A_0 & E_0\\ 0 & 0
\end{bmatrix}$, $B = \begin{bmatrix}
	B_0\\0
\end{bmatrix}$, and $C = \begin{bmatrix}
	C_0 & F_0
\end{bmatrix}$.

It is not difficult to see that the augmented system \eqref{extended system} satisfies Assumption \ref{assump_obs}, i.e. $rk\begin{bmatrix}
	E\\C
\end{bmatrix}=rk\begin{bmatrix}
	I_n & 0\\ 0 & 0\\ C_0 & F_0
\end{bmatrix}=n+r$.

A standard form for the discrete-time ESO for the augmented system \eqref{extended system} is given by the following observer dynamics equation
\begin{equation*}
	\hat{x}(k+1) = A_O\hat{x}(k) + B_O^uu(k) + B_O^yy(k) + N_Oy(k+1).
\end{equation*}
Since Assumption \ref{assump_obs} holds, the necessary and sufficient condition for the existence of a model-based ESO (or similar variants) is the detectability of the augmented system \eqref{extended system}, i.e., $rk \begin{bmatrix} \lambda E - A \\ C \end{bmatrix} =n+r,\,\ \forall \lambda \in \mathbb{C}, \ |\lambda| \geq 1$.
This condition can be reformulated by expressing the matrices of the augmented system in terms of the original system matrices \( (A_0, E_0, C_0, F_0) \). Specifically, the rank condition becomes equivalent to   
\[
rk \begin{bmatrix} \lambda I_n - A_0 & -E_0 \\ C_0 & F_0 \end{bmatrix} = n + r,\,\ \forall \lambda \in \mathbb{C}, \, |\lambda|
\geq 1, \, \lambda \text{ finite}.
\]
This reformulation explicitly links the detectability of the augmented system \eqref{extended system} to the strong detectability of the original system \eqref{before extended system} (the triple \( (A_0, E_0, C_0, F_0) \) is strongly detectable if and only if the above rank condition holds). This model-based existence condition coincides with another augmented system based ESO design method \cite{chen2025model}, which uses the difference between the future disturbance and the current one, i.e., $d(k+1)-d(k)$, as an external input. 
Our proposed method provides a data-driven design of ESO for system \eqref{before extended system}. 
The data-driven construction of an ESO fundamentally relies on obtaining input-output-state data sequences that can sufficiently excite the dynamic modes of system \eqref{extended system}. As analyzed in Section \ref{sec: full order}, for the ESO to be constructed, its historical input data $\{u_d(k)\}_{k=0}^{T+s-1}$ must first satisfy the persistent excitation condition in Assumption \ref{assump:PE}. Under this premise, the design further requires the historical input-output data $(\{u_d(k)\}_{k=0}^{T-1},\{y_d(k)\}_{k=0}^{T-1})$, and the augmented state vector \( \{\tilde x_d(k)\}_{k=0}^{T-1} \), which is defined by concatenating the original system state and the disturbance input. Note that without the disturbance input data, it is impossible to estimate $d(k)$ online in the observer framework, since any invertible coordinate transformation does not alter the system input and output.  

	

The complete augmented state data can be acquired during specific experimental phases, system commissioning, or through high-fidelity sensing and estimation techniques employed temporarily for observer calibration. The specific observer construction procedure is detailed in Theorem \ref{thm:data_driven observer}, which provides a data-driven characterization for the existence of an ESO, specifying the necessary and sufficient conditions that must be satisfied. Note that UIO is not appliable here since an UIO does not estimate the unknown input, while the distrubance must be inferred in an ESO. 

\begin{remark}
While collecting historical disturbance data is feasible during debugging or specific experimental phases, relying on continuous online monitoring for daily operations is often impractical. Maintaining a dedicated sensor network not only increases hardware costs but also faces implementation hurdles in complex industrial environments, such as spatial constraints or the need for invasive modifications. Alternatively, an ESO acts as a `soft sensor', estimating disturbances in real-time using only standard control inputs and outputs. This approach circumvents the need for additional instrumentation, offering a more practical and cost-effective solution.
\end{remark}


\section{Numerical Examples}\label{sec: simulation}
In this section, we present some numerical examples to validate the efficacy of the proposed observers.
\begin{example}\label{example1}
	Consider the discrete-time descriptor system in the form of \eqref{2_1} with
	\begin{align*}
		E &= \begin{bmatrix}
			1 & 2 & 1\\
			0 & 2 & 1\\
			1 & 0 & 0
		\end{bmatrix},\quad
		A = \begin{bmatrix}
			0.153 & 0.045 & 0.069\\
			0.156 & 0.252 & 0.156\\
			0.135 & -0.171 & -0.636
		\end{bmatrix},\\
		B &= \begin{bmatrix}
			1\\
			1\\
			0.2
		\end{bmatrix},\quad
		C = \begin{bmatrix}
			1 & 0 & 0\\
			1 & 0 & 1
		\end{bmatrix}.
	\end{align*}
	A data-driven observer is constructed according to Theorem~\ref{thm:data_driven observer}. The state data length is chosen as $T=20$. The historical input sequence has length $T+s-1=20$ and is generated from the uniform distribution on $(-5,5)$. The initial slow-subsystem state is selected randomly, and the free parameter in Remark~\ref{remark3} is chosen as $K_1=0$. The resulting observer matrix $\Sigma_{X_p}$ is
	\begin{align*}
		\Sigma_{X_p}=
		\begin{bmatrix}
			0.0046 & -0.2001 & -0.0257\\
			0.0397 & 0.0037 & 0.1422\\
			-0.0046 & 0.2001 & 0.0257
		\end{bmatrix},
	\end{align*}
	whose eigenvalues have maximum modulus $0.2083$. Hence, $\Sigma_{X_p}$ is Schur stable. For validation, another trajectory is generated with testing input $u(k)=4\sin(k)$. The initial slow-subsystem state and the initial observer state $\hat{x}(0)$ are both selected randomly from $(0,2)^2$ and $(0,2)^3$, respectively. Figure~\ref{fig:image1} shows that the estimated states closely track the true states.
	
	\begin{figure}[H]
		\centering
		\includegraphics[ width=0.48\textwidth]{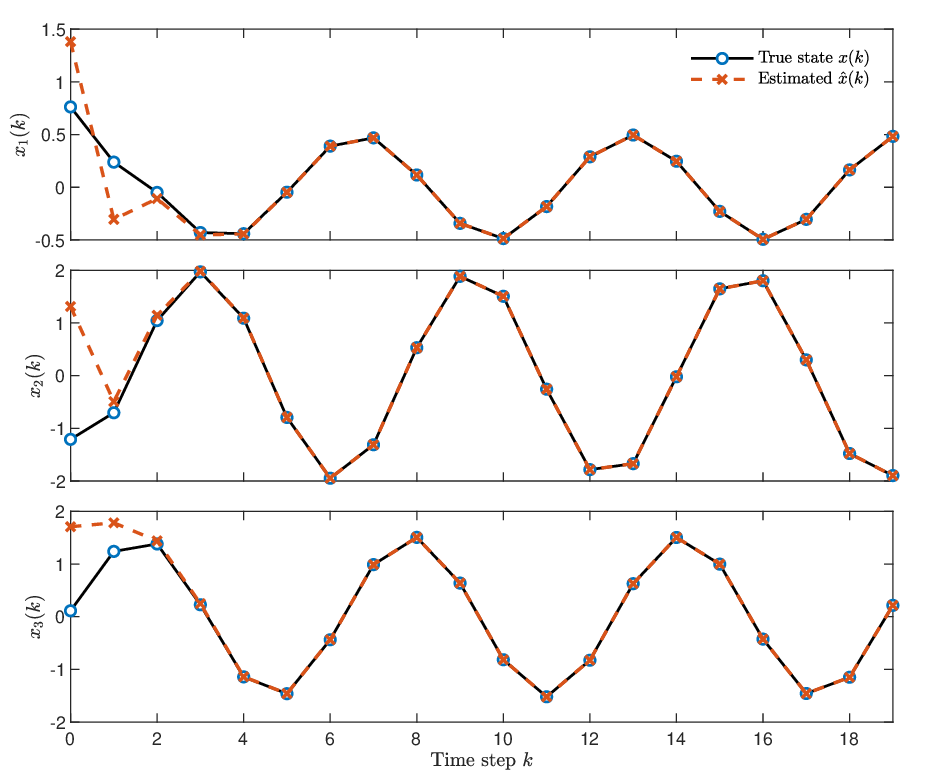}
		\caption{State estimation of the data-driven   observer.}
		\label{fig:image1}
	\end{figure}
\end{example}

\begin{example}\label{example2}
	Consider the descriptor system with unknown inputs in the form of \eqref{3_1}, where the matrices $E$, $A$, $B$, and $C$ are the same as in Example~\ref{example1}, and
	\begin{align*}
		F=[1,\ 0.2,\ 0.5]^\top.
	\end{align*}
	A data-driven UIO is then constructed according to Theorem~\ref{thm:full_order_data-driven_UIO}. The state data length is again chosen as $T=20$. The historical known input and unknown input both have length $T+s-1=20$ and are independently generated from the uniform distribution on $(-5,5)$. The resulting observer matrix $\bar{\Sigma}_{X_p}$ is
	\begin{align*}
		\bar{\Sigma}_{X_p}=
		\begin{bmatrix}
			0.1501 & -0.0334 & -0.3300\\
			0.0743 & 0.1343 & -0.0961\\
			-0.1501 & 0.0334 & 0.3300
		\end{bmatrix},
	\end{align*}
	whose eigenvalues have maximum modulus $0.4628$. Hence, $\bar{\Sigma}_{X_p}$ is Schur stable. For validation, the known input is chosen as $u(k)=4\sin(k)$, while the unknown input is generated randomly from $(-1,1)$. The initial slow-subsystem state and the initial observer state are selected in the same manner as in Example~\ref{example1}. Figure~\ref{fig:image3} shows accurate state reconstruction in the presence of unknown inputs.
	
	\begin{figure}[H]
		\centering
		\includegraphics[ width=0.48\textwidth]{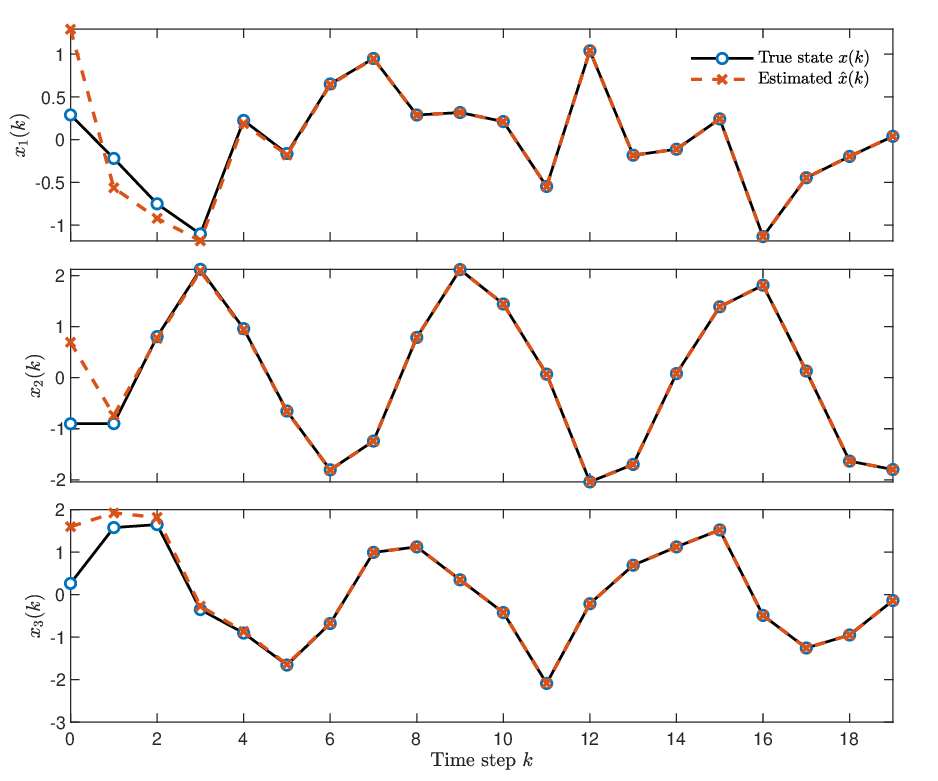}
		\caption{State estimation of the data-driven   UIO.}
		\label{fig:image3}
	\end{figure}
\end{example}

\begin{example}\label{example4}
	Consider the linear time-invariant system in the form of \eqref{before extended system} with
	\begin{align*}
		A_0&=\begin{bmatrix}
			0.153 & 0.045 & 0.069\\
			0.156 & 0.252 & 0.156\\
			0.135 & -0.171 & -0.636
		\end{bmatrix},\quad
		B_0=\begin{bmatrix}
			1\\
			1\\
			0.2
		\end{bmatrix},\\
		E_0&=\begin{bmatrix}
			1 & 0\\
			0 & 0\\
			0 & 1
		\end{bmatrix},\quad
		C_0=\begin{bmatrix}
			1 & 0 & 0\\
			0 & 1 & 0
		\end{bmatrix},\quad
		F_0=\begin{bmatrix}
			1.2 & 1\\
			0 & 1
		\end{bmatrix}.
	\end{align*}
	The augmented state is $\tilde x(k)=\begin{bmatrix}x(k)^\top & d(k)^\top\end{bmatrix}^\top\in\mathbb{R}^5$. A data-driven ESO is constructed according to Theorem~\ref{thm:data_driven observer}. The augmented state data length is chosen as $T=25$. The historical known input sequence has length $T+s-1=25$ and is generated from the uniform distribution on $(-5,5)$, while the disturbance sequence has the same length and is generated from the uniform distribution on $(-3,3)$. The initial plant state $x(0)$ is selected randomly. The resulting observer matrix $\Sigma_{X_p}$ is
	\begin{align*}
		\Sigma_{X_p}=
		\begin{bmatrix}
			-0.1728 & 0.1386 & 0.0690 & 0.6090 & -0.2322\\
			0.1385 & 0.1738 & 0.1560 & -0.0210 & -0.0957\\
			-0.0741 & -0.3776 & -0.6360 & -0.2509 & 0.5843\\
			0.2594 & 0.0294 & 0.0725 & -0.5251 & 0.1138\\
			-0.1385 & -0.1738 & -0.1560 & 0.0210 & 0.0957
		\end{bmatrix},
	\end{align*}
	whose eigenvalues have maximum modulus $0.7426$. Hence, $\Sigma_{X_p}$ is Schur stable. For validation, the known input is chosen as $u(k)=4\sin(k)$, the disturbance sequence is generated randomly from $(-2,2)^2$, the initial plant state is selected randomly from $(-2,0)^3$, and the initial observer state $\tilde x(0)$ is selected randomly from $(0,2)^5$. Figure~\ref{fig:image9} shows accurate simultaneous estimation of the plant states and disturbances.
	
	\begin{figure}[t]
		\centering
		\includegraphics[width=0.48\textwidth]{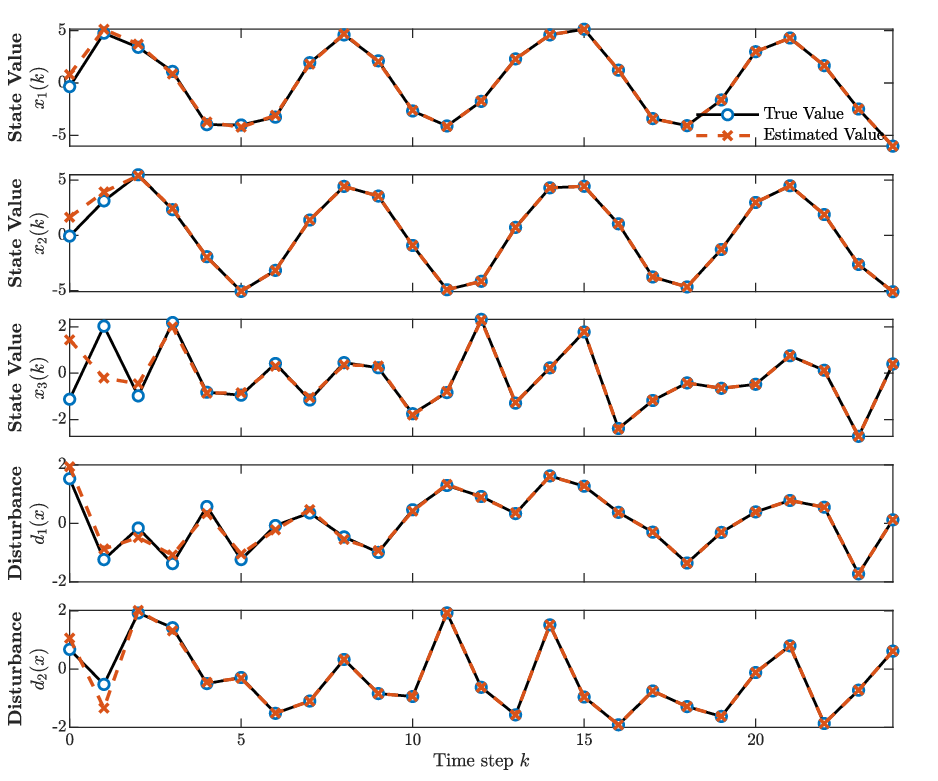}
		\caption{State estimation of the data-driven ESO.}
		\label{fig:image9}
	\end{figure}
\end{example}
\section{Conclusion}
This paper has established a data-driven framework for state observer designs in descriptor systems. By utilizing historical data, we derived data-based existence conditions and design for a normal state observer. Furthermore, the framework was extended to provide a data-driven design for full-order UIO to effectively handle unknown inputs. Notably, for both standard state observers and UIOs, we rigorously proved the mathematical equivalence between the proposed data-driven and classical model-based approaches. Additionally, the versatility of the proposed approach was demonstrated through its application to ESO design, enabling simultaneous estimation of system states and disturbances via system augmentation. Future work will focus on designing reduced-order observers, enhancing the robustness of the proposed method against measurement noise, and extending the current linear framework to nonlinear descriptor systems.

\appendix

\bibliographystyle{elsarticle-num}        
{\footnotesize
	\bibliography{example}
}
\end{document}